\title{Lossy communication constrains iterated learning}
\author[1]{Ben Prystawski}
\author[2]{Dilip Arumugam}
\author[1,2]{Noah D. Goodman}
\date{}
\affil[1]{Department of Psychology, Stanford University}
\affil[2]{Department of Computer Science, Stanford University}
\newif\ifsubmit
\newcommand{\dnote}[1]{}
\newcommand{\bnote}[1]{}
\newcommand{\nnote}[1]{}
\newcommand{\ndg}[1]{}
\newcommand{\dnote}[1]{\textcolor{blue}{Dilip: #1}}
\newcommand{\bnote}[1]{\textcolor{orange}{Ben: #1}}
\newcommand{\nnote}[1]{\textcolor{green}{Noah: #1}}
\newcommand{\ndg}[1]{\nnote{#1}}
\newcommand{\mc}[1]{\mathcal{#1}}
\newcommand{\ra}{\rightarrow}
\newcommand{\bP}{\mathbb{P}}
\newcommand{\bI}{\mathbb{I}}
\newcommand{\bE}{\mathbb{E}}
\newcommand{\bH}{\mathbb{H}}
\newcommand{\bR}{\mathbb{R}}
\newcommand{\bN}{\mathbb{N}}
\newcommand{\kl}[2]{D_{\mathrm{KL}}(#1 \mid\mid #2)}
\newcommand{\jdiv}[2]{D_{\mathrm{J}}(#1 \mid\mid #2)}
\newcommand{\ubr}[1]{\underbrace{#1}}
\newtheorem{theorem}{Theorem}
\newtheorem{fact}{Fact}
\begin{document}
                            
\maketitle

\begin{abstract}
    Humans' distinctive role in the world can largely be attributed to our capacity for iterated learning, a process by which knowledge is expanded and refined over generations. A range of theories seek to explain why humans are so adept at iterated learning, many positing substantial evolutionary discontinuities in communication or cognition.  Is it necessary to posit large differences in abilities between humans and other species, or could small differences in communication ability produce large differences in what a species can learn over generations? We investigate this question through a formal model based on information theory. We manipulate how much information individual learners can send each other and observe the effect on iterated learning performance. Incremental changes to the channel rate can lead to dramatic, non-linear changes to the eventual performance of the population. We complement this model with a theoretical result that describes how individual lossy communications constrain the global performance of iterated learning. Our results demonstrate that incremental, quantitative changes to communication abilities could be sufficient to explain large differences in what can be learned over many generations.
\end{abstract}
\section{Introduction}

As humans living in the modern world, we have access to an unfathomably vast base of knowledge, from the history of flight to methods for cooking chickpeas. Almost none of this knowledge comes from our individual experience. Instead, human knowledge was built up over many generations through a process of learning and transmitting knowledge. This process of iterated learning has given humans repertoires of knowledge and skills that enable us to invent technologies like snowshoes and air conditioning, allowing us to live in virtually every environment in the world \citep{boyd2011cultural}. This process of incremental improvement over generations is often called the \textit{cultural ratchet} or \textit{cumulative cultural evolution}  \citep{tennie2009ratcheting,mesoudi2018cumulative}. While there is ample historical evidence for human knowledge growing over generations, there is still no agreed-upon answer to the question of why humans exhibit such a strong cultural ratchet compared to other animals.

Common explanations of humans' distinctive capacity for iterated learning focus on communication and social learning. However, the mere presence of social learning cannot be sufficient, as many species learn socially but do not exhibit a ratchet effect \citep{whiten2016cultural}. Chimpanzees learn to crack nuts, macaques learn to wash potatoes, and crows learn to make tools from each other \citep{boesch1994nut,kawai1965newly,whiten1999cultures,kenward2005tool}. Yet none of these species comes close to exhibiting the variety and complexity of culturally-transmitted knowledge we see in humans. Still, humans may exhibit a cultural ratchet because we are \textit{better} social learners than other species. Several explanations for humans' uniqueness appeal to how humans learn from each other, including the ability to pay attention to others' attention \citep{tomasello1999cultural}, improvements in domain-general learning mechanisms and social motivation \citep{heyes2018cognitive}, and specific heuristics governing when to learn socially and who to learn from \citep{boyd2011cultural}.

Some explanations for human cultural distinctiveness appeal to discontinuous cognitive innovations in specific domains, like an innate capacity to process recursive structures \citep{fodor1983modularity,hauser2002faculty,berwick2016only}. Others focus on domain-general improvements in the ability to learn and process information, which are more compatible with incremental changes to individual cognition \citep{heyes2018cognitive}. How big are the differences in communication capacity we need to explain? Do humans need to be \textit{much} better social learners than other animals in order to exhibit a cultural ratchet, or could small differences in the ability to communicate produce large differences in what a species can learn over many generations? If the latter is true, we do not need to appeal to large changes to individual minds to explain the unparalleled breadth and depth of human knowledge. A precise answer to this question might be found through formal modeling of iterated learning.

Many formal models of social cognition study learners in the context of distributed statistical inference. Populations of learners who each receive limited information but can communicate with each other are typically compared to the ideal of Bayesian inference given all the information encountered by all learners. This paradigm was used in seminal work by \citet{beppu2009iterated}, which introduced the idea that language enables ``posterior passing.'' A chain of Bayesian learners who each see limited evidence and pass their posterior distribution to the next chain member, who then uses it as a prior, behaves equivalently to one learner who has access to all of the evidence observed by all generations. Subsequent work has used the paradigm of distributed statistical inference to understand how different social learning heuristics and social interaction mechanisms can enable a population to approximate an ideal learner, and when they fall short \citep{krafft2021bayesian,hardy2023resampling,zubak2024distributed}.
Still, we lack a precise account of the relationship between how good individuals are at communicating and the overall learning of a population.


In this paper, we develop a formal framework to study the relationship between individuals' ability to communicate and what can be learned over generations in a precise, quantitative way. Our framework is rooted in fundamental limits on how much information can be communicated from a sender to a receiver. Our model produces a smooth, yet non-linear, relationship between individual communication capacity and the performance of populations of learners over generations. A fraction of a bit can make the difference between knowledge accumulating and learning hitting a plateau. We complement these simulations with a theoretical result that extends Fano's inequality~\citep{fano1952TransInfoLectNotes}, establishing a relationship between the value of knowledge accumulated at any point in learning and the limits on communication up to that point.

\begin{figure}
    \centering
    \includegraphics[width=\linewidth]{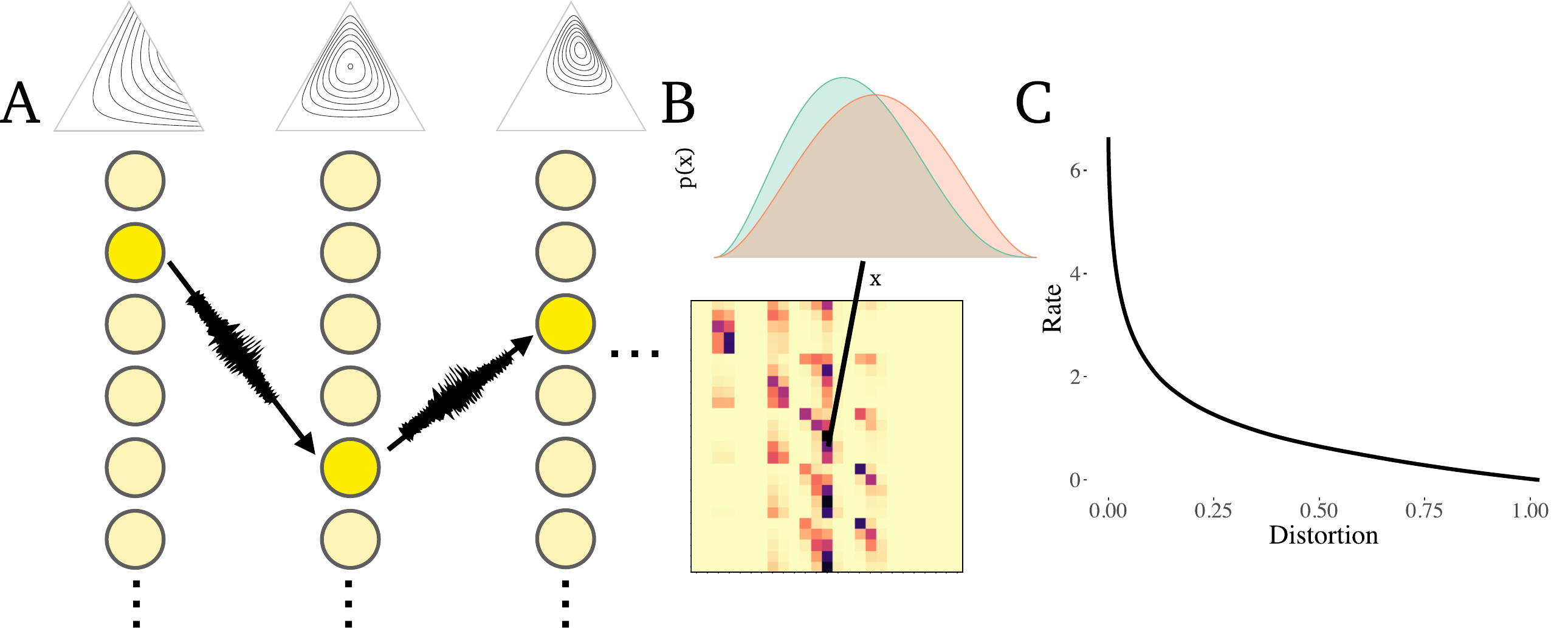}
    \caption{A: Illustration of our iterated learning setup. Learners (yellow circles) observe samples from a true distribution and form posterior beliefs, represented as Dirichlet distributions (top). After observing a fixed number of samples, a learner passes its belief to the next over a rate-limited channel. The next learner starts learning using the received distribution as a prior. B: Illustration of an example channel for a simplified Beta-Bernoulli learning task, where each row corresponds to a transmitted posterior belief and each column represents a received prior. Cells are colored by probability. Example transmitted (green) and received (orange) Beta distributions for a single cell are shown above. C: Illustration of the rate-distortion tradeoff. As the maximum acceptable distortion increases, the necessary rate decreases. Each point along this curve corresponds to a different channel.}
    \label{fig:overview}
\end{figure}

\section{Iterated learning via rate-limited posterior passing}
\label{sec:model}

Our modeling framework builds on the metaphor of iterated learning as posterior passing \citep{beppu2009iterated}.
Our work expands upon existing successful applications of rate-distortion theory to model facets of human communication such as pragmatic language interpretation \citep{zaslavsky2020rate}, language production under cognitive constraints \citep{futrell2021information}, iterated reinforcement learning~\citep{prystawski2023cultural,zhao2025discovering}, as well as human cognition more generally~\citep{sims2016rate,nagy2020optimal,lai2021policy}. The overall modeling setup is illustrated in Figure~\ref{fig:overview}. Code and data are available at \url{https://github.com/benpry/lossy-iterated-learning}

\subsection{Rate-distortion theory}
\label{sec:rdt}

We can represent imperfect communication as a lossy compression problem \citep{shannon1959coding}. In this framing, the communicator's beliefs are the uncompressed data and the recipient's beliefs are the compressed data. 
A lossy compression problem has two components: an information source, which represents uncompressed data, and a distortion function, which quantifies the loss of useful information. For a given information source $p$, one looks to compress data $X \sim p$ via a \emph{channel}. A channel, $p(\cdot \mid x)$, is a conditional probability distribution that assigns probability over compressed outputs $\hat{\mc{X}}$ to each input $x \in \mc{X}$. Given a source $p$, the channel rate $R \in \bR_{\geq 0}$ is the mutual information $\bI(X;\hat{X})$ between the original, uncompressed data $X \sim p$ (channel input) and the compressed data $\hat{X} \sim p(\cdot \mid X)$ (channel output). Intuitively, channel rate is a measure of how much information from the source is successfully transmitted across the channel.
While the natural goal of compression is to discard as many bits of information as possible, some bits are more important than others for any given task. Rate-distortion theory enables us to formalize which bits of information should be prioritized via a distortion function (intuitively, a loss function) $d: \mc{X} \times \hat{\mc{X}} \ra \bR_{\geq 0}$. The distortion function is a non-negative, real-valued function on pairs of uncompressed and compressed data. The value of the distortion function $d(x,\hat{x})$ directly quantifies the loss of fidelity incurred by using $\hat{x} \in \hat{\mc{X}}$ in lieu of $x \in \mc{X}$. For the special case that $\mc{X} = \hat{\mc{X}}$, this is essentially a measure of how bad a given reconstruction is compared to its original source. We are often interested in the expected distortion incurred by a channel $p(\hat{X} \mid X)$ for a given source $p(X)$: $$\bE\left[d(X,\hat{X})\right] = \bE_{X \sim p}\left[\bE_{\hat{X} \sim p(\cdot \mid X)}\left[d(X,\hat{X})\right]\right].$$
We may then ask what the minimum possible expected distortion we can achieve for a given rate is. Rate-distortion theory yields a clear, quantitative answer in the form of the distortion-rate function. For a given rate limit $R \in \bR_{\geq 0}$, the distortion-rate function is defined as $$\mc{D}(R) = \inf\limits_{p(\hat{X} \mid X)} \bE\left[d(X,\hat{X})\right] \text{ such that } \bI(X;\hat{X}) \leq R.$$ 
We can compute channels that achieve the above lower bound on distortion using the classic Blahut-Arimoto algorithm~\citep{blahut1972computation,arimoto1972algorithm,csiszar1974computation}, which is an alternating minimization algorithm that minimizes the (unconstrained) Lagrangian~\citep{boyd2004convex} of the constrained optimization problem: $$\inf\limits_{p(\hat{X} \mid X)} \bI(X, \hat{X}) + \beta \cdot \bE\left[d(X, \hat{X})\right].$$ The hyperparameter $\beta \in \bR_{\geq 0}$ controls the relative importance of minimizing channel rate and distortion. 
We provide more extensive preliminaries, including a detailed derivation of the Blahut-Arimoto algorithm in Appendix~\ref{app:preliminaries}.

\subsection{Limited channels in a Bayesian learning task}

We use a Dirichlet-Categorical learning task to study populations of learners. In the task, there is an unknown probability vector $\mathbf{p}$ over $K \in \bN$ elements. Learners observe samples $X \sim \text{Categorical}(\mathbf{p})$ and update their beliefs about $\mathbf{p}$. A learner's belief is represented as a Dirichlet distribution with parameter vector $\boldsymbol{\alpha}$. 
There is a simple closed-form solution for the parameters of the posterior given the parameters of the prior and the observed sample:
\begin{align*}
    \alpha_j &= \begin{cases}
        \alpha_j + 1 & x_i = j \\
        \alpha_j & \text{otherwise}
    \end{cases}
\end{align*}
Due to the above update rule, Dirichlet parameters are often called ``pseudocounts'' because they keep track of the number of times each value has been observed from the Categorical distribution. 

We measure the accuracy of a belief $\boldsymbol{\alpha}$ using the log probability of the true probability vector $\mathbf{p}$ under the Dirichlet distribution parameterized by $\boldsymbol{\alpha}$. That is, 
\begin{align*}
    \text{score}(\boldsymbol{\alpha}) = \log p (\mathbf{p} ;  \boldsymbol{\alpha})
\end{align*}
Each learner has a limited lifespan $L \in \bN$, meaning it can observe at most $L$ samples from the true distribution. The agent uses this evidence to form a posterior distribution. Learners in the next generation select a member of the previous generation to learn from at random, then receive a compressed version of the teacher's posterior belief. The learner then uses this compressed posterior as a prior and continues learning.

We compute the channel for posterior passing, $p(\hat{\boldsymbol{\alpha}} \mid \boldsymbol{\alpha})$, using the Blahut-Arimoto algorithm. The channel inputs and outputs are both integer-valued parameter vectors. To ensure that the set of possible inputs and outputs is finite, we cap the maximum value of inputs at 21 and outputs at 20. We use a source distribution $p(\boldsymbol{\alpha})$ that is uniform over possible channel inputs and use the Jeffreys divergence~\citep{jeffreys1946invariant}, $D_{\mathrm{J}}$, between the Dirichlet distributions parameterized by $\boldsymbol{\alpha}$ and $\hat{\boldsymbol{\alpha}}$ as a distortion function. Jeffreys divergence is a symmetrized version of Kullback-Leibler (KL) divergence~\citep{kullback1951information}: $\jdiv{p}{q} = \kl{p}{q} + \kl{q}{p}$. Results with different choices of source distribution, distortion function, and lifespan are presented in Appendix~\ref{app:variants}. For a given hyperparameter $\beta$, the Blahut-Arimoto algorithm computes
\begin{align*}
    \inf_{p(\hat{\boldsymbol{\alpha}} | \boldsymbol{\alpha})}\bI(\boldsymbol{\alpha}; \hat{\boldsymbol{\alpha}}) + \beta \cdot \mathbb{E}\Big[ \jdiv{p(\cdot \mid \boldsymbol{\alpha})}{ p(\cdot \mid \hat{\boldsymbol{\alpha}})} \Big]
\end{align*}
By varying the value of $\beta$, we can construct channels that are optimal under different rate limits and study the behavior of populations of agents subject to rate-limited communication over each channel.

This learning task is simple, yet has many desirable properties. Firstly, cumulative learning is possible. It is always possible to increase the log-likelihood of $\mathbf{p}$, and observing more samples from the true distribution reliably does so. The hypothesis space is discrete and small enough to analytically compute optimal channels. While various methods exist to approximate optimal channels, here we can guarantee that results we find are not an artifact of any specific approximation. Finally, learning from evidence is a fundamental task that underpins much of human behavior, like exploration and invention. It is therefore necessary for a population to be able to learn cumulatively from evidence encountered over generations in order to support more complex cultural phenomena.

\section{Performance by generation and channel rate}

We analyze how learners' average score changes over generations for populations with different channel rates.
We can analytically compute the proportion of learners who will have each belief at every timestep, ensuring that our results are not the result of any particular approximation. Details of this analytic formulation are presented in Appendix~\ref{app:analytic}. We use a three-dimensional Dirichlet-Categorical learning task with true $\mathbf{p} = [0.7, 0.2, 0.1]$ and a single observation per generation for our main results.

\subsection{Non-linear effects of channel rate on iterated learning performance}
\begin{figure}
    \centering
    \includegraphics[width=0.49\linewidth]{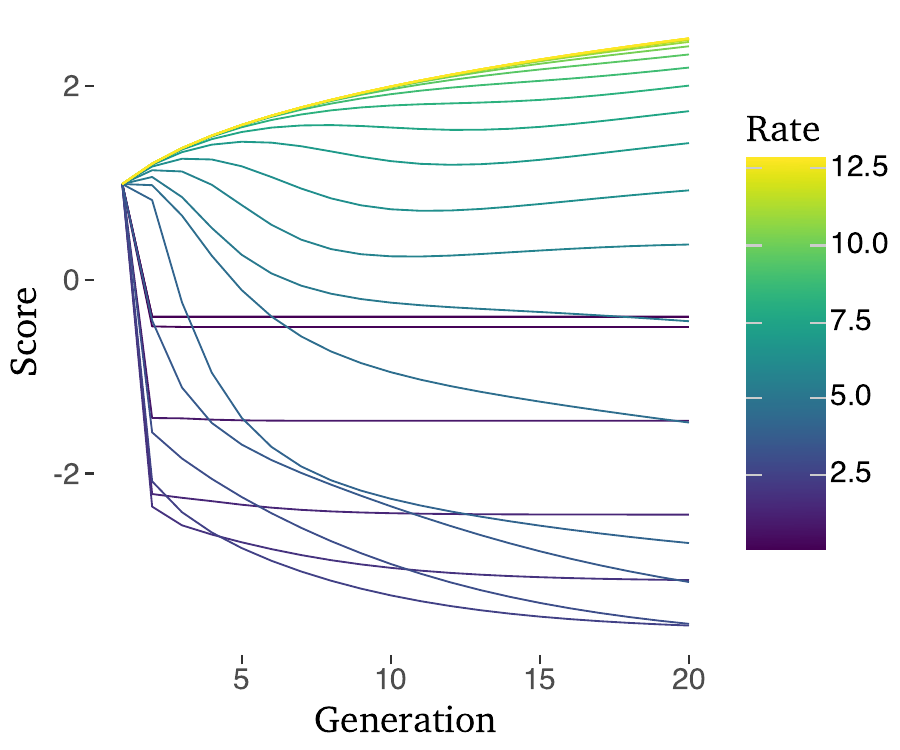}
    \includegraphics[width=0.49\linewidth]{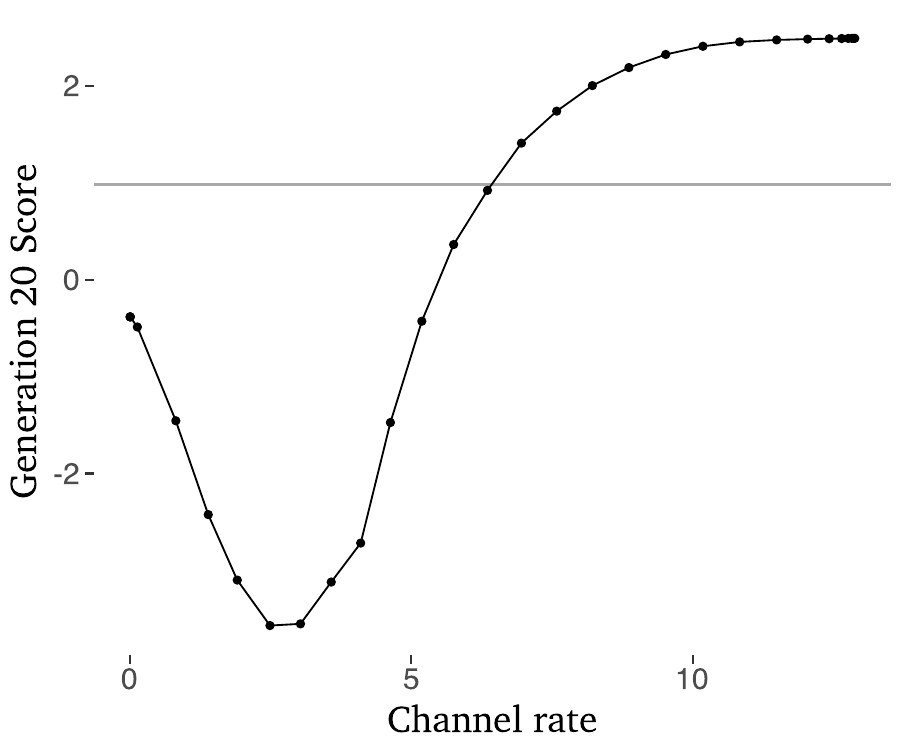}
    \caption{Accuracy of chains of learners' beliefs by generation and channel rate. Left: Scores achieved by learners in each generation of our model by communication rate. Right: Performance at generation 20 by channel rate for communication between individuals. Grey line denotes the expected score of an individual learner starting with a uniform prior and making one observation.}
    \label{fig:performance}
\end{figure}

Figure~\ref{fig:performance} shows the performance of agents in the learning task after 20 generations as a function of the channel rate between individual learners. There is a non-linear relationship between individuals' capacity for communication and iterated learning performance. In the subsequent sections, we analyze why each of these phases occurs and discuss the consequences for theories of cultural learning.

Figure~\ref{fig:channel-viz} visualizes individual channels for a smaller instance of our learning task: a Beta-Bernoulli learning task with a maximum parameter value of 5. In this setting, the set of possible beliefs is small enough that we can visualize exactly how the channel maps each transmitted belief to a received belief.

\subsection{Low rate limits create misinformative communication}

\begin{figure}
    \centering
    \includegraphics[width=\linewidth]{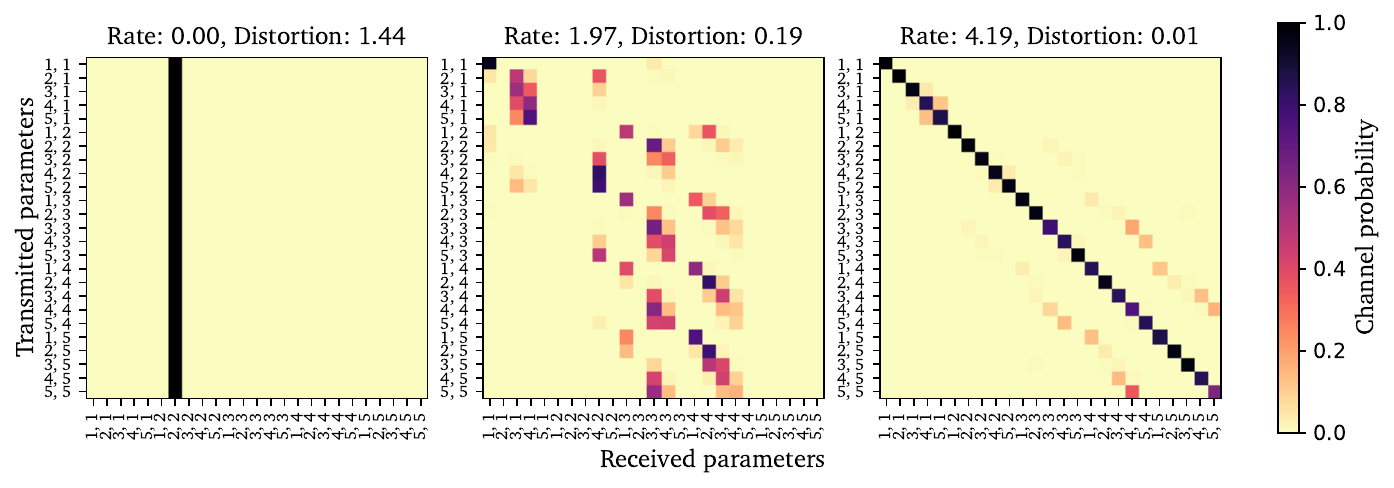}
    \caption{Visualization of channels in a simplified Beta-Bernoulli learning task for three different rate limits. Each row corresponds to a pair of transmitted parameters and  each column corresponds to a pair of received parameters. Cells are colored according to the probability that the channel assigns to the received parameters given the source parameters (yellow=0, purple=1). A channel with a rate of 0 maps every pair of source parameters to the received parameters that minimize expected distortion over all parameters. A channel with a high rate looks diagonal, with every pair of parameters being mapped to itself with high probability.}
    \label{fig:channel-viz}
\end{figure}

When the channel rate is low, subsequent generations perform worse than the first generation. This can be seen in Figure~\ref{fig:performance}, where the score for populations with rate limits below about 7 drops and does not recover.

To understand why this happens, we can consider the case where the rate limit is 0. In this case, the optimal channel maps every possible input to an output distribution that minimizes expected distortion over the entire source distribution. For a uniform source distribution, a channel with minimal rate assigns high probability to $(4, 4, 4)$, representing an informative prior centered around the probabilities $[\frac{1}{3}, \frac{1}{3}, \frac{1}{3}]$. An agent that starts with an informative prior around the wrong probability vector needs to make more observations before the data can overwhelm the prior. In contrast, the first generation starts with a uniform prior, so the posterior it arrives at usually assigns higher probability to $\mathbf{p}$. The first panel of Figure~\ref{fig:channel-viz} shows a channel with a rate limit of 0. This channel maps all inputs to parameters $\alpha=2,\beta=2$ with high probability.

In practice organisms will likely favor individual learning for tasks where they do not have enough communication ability to make learning socially better than nothing. Given that there is a sizable region of rate limits for which received posteriors are worse than uninformative priors, increases in communication rate are not always beneficial. It is only when communication capacity passes a certain threshold (exceeding the horizontal line in Figure~\ref{fig:performance} [right]) that social learning is useful.

\subsection{Intermediate rate limits can create plateaux}

Once learners pass the threshold where iterated learning can out-perform individual learning, iterated learning often exhibits drops in performance for a few generations followed by gradual improvement. Still, learning can reach a plateau after several generations.

Many of the trajectories with rate limits between 7.5 and 10 (blue and dark green lines) in Figure~\ref{fig:performance} (right) show this pattern: Expected performance drops then recovers before eventually leveling off. This result indicates that improvement for a few generations followed by a plateau is a natural consequence of moderately expressive communication.

The middle plot in Figure~\ref{fig:channel-viz} shows the input-output mapping for a channel with an intermediate rate limit. Some pairs of parameters are transmitted accurately, as evidenced by the corresponding cell along the diagonal having high probability. For instance, the channel maps $(1, 4)$ back to itself with high probability. Other beliefs, like $(5, 3)$, are almost never transmitted accurately, as their corresponding diagonal cells have low probability. Some pairs of parameters, like $(5, 5)$, are almost never received regardless of what the original belief is, as their corresponding columns are entirely low-probability (these beliefs are ``hard to talk about'' given the limited communication channel).

Channels where some beliefs are rarely transmitted accurately, like the one shown in Figure~\ref{fig:channel-viz}, can lead to learning plateaux as agents fail to receive important beliefs on the path to cumulative improvement, instead either receiving a less-informative set of parameters (thus negating their individual learning progress) or an inaccurate set of parameters that harms the next generation's ability to learn further.

\subsection{High channel rates produce cumulative learning}

When the channel rate is sufficiently high, learners exhibit cumulative progress, improving their estimates of the true probability incrementally over many generations. The trajectory using the channel with the highest channel rate in Figure~\ref{fig:performance} (left) (yellow line) shows learning with high fidelity (but imperfect) communication. The channel with this rate limit almost always maps parameter vectors back to themselves. Learners with high-fidelity communication improve their estimates of the true probabilities steadily over many generations as information accumulates.

The trajectories with rate limits above 10 (corresponding to the right side of Figure~\ref{fig:performance} [right]) are very close to the trajectory with perfect communication, despite differing by several bits in channel rate. These results indicate that perfect transmission of beliefs is not necessary for cumulative learning over generations. An imperfect, yet high-fidelity, communication channel can also produce cumulative learning. 

\subsection{Discussion}

We have explored three important properties of the relationship between channel rate and iterated learning performance. First, increases in channel rate are not always helpful for iterated learning. Agents in generation 20 would be better off in a chain of learners passing zero bits of information to each other than they would be in a chain that passes two or three bits. In order to truly benefit from communication, learners would need to cross the chasm from 0 bits to the portion of the curve where incremental gains in communication are helpful for iterated learning (around 5 bits). Second, in the region where increases to channel rate are helpful, the curve is quite steep. Small changes in channel rate can compound to create large changes in what a chain of learners can learn over many generations. Finally, there are diminishing returns to channel rate after about 9 bits, where further increases have a negligible effect on performance. This demonstrates that perfect communication is not necessary to learn cumulatively for many generations. Taken together, our simulations show that continuous changes to constraints on individual communication can produce dramatic changes to what can be learned over generations.

Choices of \textit{who} to learn from can also drive the success of iterated learning \citep{thompson2022complex}. In a variant of our model where learners learn selectively from successful members of the previous generation, we find that selective social learning can compensate for lower rate limits to an extent. The details of this analysis are included in Appendix~\ref{app:ssl}.

\section{A variant of Fano's Inequality for iterated learning}

This section presents an extension of Fano's inequality  \citep{fano1952TransInfoLectNotes} that describes how rate limits on individual communication shape the global progress of iterated learning. We prove that the accuracy with which a chain of learners who iteratively pass along beliefs can learn a true distribution is constrained by the amount of information transmitted from one learner to the next. This section describes the framing and the intuition behind the result. The full proof can be found in Appendix~\ref{app:fano-proof}.

As in previous sections, we model the task of iterated learning as inferring a true probability distribution $\mathbf{p}$ from observations. We discretize the space of possible distributions to a set $\{P_v\}_{v \in \mc{V}} \subset \Delta(\mc{X})$ indexed by elements of $\mc{V}$. The task of learning then becomes a hypothesis testing problem~\citep{yu1997assouad,yang1999information}, where the true generating distribution $P_V \in \{P_v\}_{v \in \mc{V}}$ is one of the elements of $\{P_v\}_{v \in \mc{V}}$ and a learner must identify it by observing samples i.i.d. from $P_V$. We assume that the true distribution is drawn uniformly at random from $\{P_v\}_{v \in \mc{V}}$, $V \sim \text{Uniform}(\mc{V})$.

Learners are arranged in a chain. The first learner in the chain begins with a prior belief $\widetilde{B}_1$ over the true distribution. We represent a guess about the true distribution informed by this prior as a random variable $\widehat{V}_1$.  Given a first observation $X_1 \sim P_V$, a learner with prior $\widetilde{B}_1$ arrives at a posterior distribution $B_2$, which is the result of applying Bayes' rule with the prior $\widetilde{B}_1$ and observation $X_1$. The learner then passes their posterior belief $B_2$ through the channel, yielding the compressed belief $\widetilde{B}_2$. As in our simulations, the next learner in the chain receives the compressed belief $\widetilde{B}_2$, which informs a prior guess as to the true distribution, represented by $\widehat{V}_2$. This process may continue across each time period $t \in [T]$, where a learner begins with prior beliefs $\widetilde{B}_t$, observes data from the true distribution $X_t \sim P_V$, updates their posterior beliefs to $B_{t+1}$, and communicates lossily-compressed beliefs $\widetilde{B}_{t+1}$ thereby resulting in an updated guess as to the true distribution $\widehat{V}_{t+1}$.

Per Section \ref{sec:rdt}, we assume that all learner beliefs other than the first generation's prior, $\{\widetilde{B}_{t}\}_{t=2}^T$, are obtained by using the channel that achieves the distortion-rate limit to compress each individual learner's posterior. Ultimately, as the task is to identify the true distribution $P_V$, it is the sequence of random variables representing the learners' refined beliefs, $\{\widehat{V}_t\}_{t=1}^{T+1}$, that determines the overall performance of iterated learning. The following theorem relates the global probability of a chain failing to identify the true distribution and how much local information is transmitted between each individual's final posterior and the next individual's initial prior, across all time periods.

\begin{theorem}
    For iterated learning over $T$ total time periods, we have $$\bP(V \neq \widehat{V}_{T+1}) \geq 1 - \frac{\min\left(\sum\limits_{t=1}^T \bI(\widetilde{B}_t, X_t; \widetilde{B}_{t+1}), \bI(V; \widetilde{B}_1) + \sum\limits_{t=1}^T \bI(V; X_t \mid \widetilde{B}_{t})\right) + \log(2)}{\log(|\mc{V}|)}.$$
    \label{thm:cultural_fano}
\end{theorem}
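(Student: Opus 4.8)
Here is the plan. I would reduce the claim to the classical Fano inequality applied to the final guess $\widehat V_{T+1}$ and then, separately, upper bound $\bI(V;\widehat V_{T+1})$ by each of the two expressions inside the minimum. Since $V\sim\mathrm{Uniform}(\mc V)$ we have $H(V)=\log|\mc V|$, and Fano's inequality (in the form valid for an arbitrary estimator) gives $H(V\mid\widehat V_{T+1})\le\log(2)+\bP(V\neq\widehat V_{T+1})\log|\mc V|$. Rewriting $H(V\mid\widehat V_{T+1})=\log|\mc V|-\bI(V;\widehat V_{T+1})$ and rearranging yields $\bP(V\neq\widehat V_{T+1})\ge 1-\frac{\bI(V;\widehat V_{T+1})+\log(2)}{\log|\mc V|}$. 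So it suffices to show $\bI(V;\widehat V_{T+1})$ is at most each of the two sums in the statement; substituting the smaller of the two then gives the theorem exactly.

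The real content is reading off the conditional-independence structure implied by the generative description. Three facts are needed. (i) The last learner forms $\widehat V_{T+1}$ using only the received belief $\widetilde B_{T+1}$, so $V\to\widetilde B_{T+1}\to\widehat V_{T+1}$ is a Markov chain and $\bI(V;\widehat V_{T+1})\le\bI(V;\widetilde B_{T+1})$ by the data-processing inequality. (ii) Each $B_{t+1}$ is a \emph{deterministic} Bayes update of $(\widetilde B_t,X_t)$, and $\widetilde B_{t+1}$ is produced from $B_{t+1}$ by a channel whose internal randomness is independent of everything else; hence $\widetilde B_{t+1}\perp V\mid(\widetilde B_t,X_t)$ for every $t$, and inductively $\widetilde B_t$ is a randomized function of $(\widetilde B_1,X_1,\dots,X_{t-1})$ with noise independent of $V$. (iii) Conditional on $V$, the observation $X_t$ is drawn afresh, independently of all variables with index $<t$.

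From these, both bounds drop out. For the first bound, data processing on $V\to(\widetilde B_T,X_T)\to\widetilde B_{T+1}$ gives $\bI(V;\widetilde B_{T+1})\le\bI(\widetilde B_T,X_T;\widetilde B_{T+1})\le\sum_{t=1}^T\bI(\widetilde B_t,X_t;\widetilde B_{t+1})$, where the last step is just nonnegativity of mutual information and the summands are exactly the information throughputs of the individual communication channels (this is where the theorem's $\min$ gets its ``channel-capacity'' term). For the second bound, combine $\widetilde B_{t+1}\perp V\mid(\widetilde B_t,X_t)$ with the chain rule: $\bI(V;\widetilde B_{t+1})\le\bI(V;\widetilde B_t,X_t)=\bI(V;\widetilde B_t)+\bI(V;X_t\mid\widetilde B_t)$. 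Unrolling this recursion from $t=T$ down to $t=1$ gives $\bI(V;\widetilde B_{T+1})\le\bI(V;\widetilde B_1)+\sum_{t=1}^T\bI(V;X_t\mid\widetilde B_t)$, the theorem's ``accumulated-evidence'' term. Chaining each with $\bI(V;\widehat V_{T+1})\le\bI(V;\widetilde B_{T+1})$, taking the minimum, and plugging into the Fano bound above completes the argument.

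I expect the main obstacle to be not any information-theoretic step -- those are routine applications of data processing and the chain rule -- but the bookkeeping needed to pin down the generative model precisely enough that the conditional independences $\widetilde B_{t+1}\perp V\mid(\widetilde B_t,X_t)$, $X_t\perp(\text{index}<t)\mid V$, and $\widehat V_{T+1}$'s dependence on the data only through $\widetilde B_{T+1}$ are genuinely justified, in particular that each channel invocation contributes a fresh, mutually independent randomization so that the recursion for the second bound telescopes cleanly.
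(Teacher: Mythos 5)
Your proof is correct, and it takes a genuinely leaner route than the paper's. You invoke the unconditional Fano bound $\bP(V\neq\widehat V_{T+1})\ge 1-\frac{\bI(V;\widehat V_{T+1})+\log 2}{\log|\mc V|}$ and then bound $\bI(V;\widehat V_{T+1})$ directly, whereas the paper works with the conditional mutual information $\bI(V;\widehat V_{T+1}\mid\widetilde B_1)$ via a conditional Fano's inequality, DPI on the long Markov chains $V-\{\widetilde B_t\}_{t=2}^{T+1}-\widehat V_{T+1}$ and $V-\{X_t\}-\{\widetilde B_t\}-\widehat V_{T+1}$ (the latter using a four-variable DPI lemma), and then a somewhat involved chain-rule expansion of $\bI(\{X_t\};\{\widetilde B_t\}_{t\ge 2}\mid\widetilde B_1)$ term by term. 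Your second bound is essentially the paper's argument repackaged as a clean recursion: $\bI(V;\widetilde B_{t+1})\le\bI(V;\widetilde B_t)+\bI(V;X_t\mid\widetilde B_t)$ unrolled, which spares you the paper's extra step of showing $\bI(V;X_t\mid\{\widetilde B_{t'}\}_{t'\le t})\le\bI(V;X_t\mid\widetilde B_t)$ by "conditioning reduces entropy." Your first bound is where the routes diverge most: you use a single DPI step $V-(\widetilde B_T,X_T)-\widetilde B_{T+1}$ to get $\bI(V;\widehat V_{T+1})\le\bI(\widetilde B_T,X_T;\widetilde B_{T+1})$, then pad with the remaining nonnegative summands; the paper instead genuinely bounds the joint quantity $\bI(\{X_t\};\{\widetilde B_t\}_{t\ge2}\mid\widetilde B_1)$ by the full sum. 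Your derivation is thus both shorter and reveals that the first term in the theorem's minimum is actually loose — the single quantity $\bI(\widetilde B_T,X_T;\widetilde B_{T+1})$ already suffices, and the sum over $t$ is present only as interpretive padding. The one thing you should be explicit about is the Markov chain $V-\widetilde B_{T+1}-\widehat V_{T+1}$: the paper only asserts the weaker conditional chain $V-\{\widetilde B_t\}_{t=2}^{T+1}-\widehat V_{T+1}$ given $\widetilde B_1$, whereas you rely on $\widehat V_{T+1}$ being a (randomized) function of $\widetilde B_{T+1}$ alone with fresh noise. That reading is natural given the model description (the final guess is "informed by" the received belief), so your stronger Markov assumption is defensible, but it is an assumption the paper's proof does not commit to.
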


Theorem \ref{thm:cultural_fano} identifies two dependencies for the quality of iterated learning over a fixed number of time periods $T$: (1) the amount of information communicated between individual learners, as measured by $\bI(\widetilde{B}_t, X_t; \widetilde{B}_{t+1})$, and (2) the task relevance of the communicated information, as measured by the initial informativeness of prior beliefs $\bI(V; \widetilde{B}_1)$ and information gains from observations made in each generation $\bI(V; X_t \mid \widetilde{B}_{t})$. One extreme of the above result occurs when channel rate falls dramatically $R \downarrow 0$. In this case, no amount of time $T$ is able to salvage learning progress as learners' priors $\widetilde{B}_{t+1}$ contain very little information about what the preceding learners encountered. However, this result highlights that equipping learners with high-throughput communication alone is not sufficient to achieve cumulative success in iterated learning. Beyond having an informative prior distribution that helps reduce uncertainty about the reality being learned, a chain of learners also needs informative observations that provide knowledge beyond what is already contained in each current belief $\widetilde{B}_t$. In the extreme case where learners have perfect communication $R \uparrow \infty$, each $\bI(\widetilde{B}_t, X_t; \widetilde{B}_{t+1})$ term is maximized as beliefs are communicated losslessly over the channel. At that point, the informativeness of each observation beyond the current knowledge of efficiently-communicating learners $\bI(V; X_t \mid \widetilde{B}_t)$ governs the success of iterated learning.
To the best of the authors' knowledge, this result is the first to establish a formal relationship between global iterated learning progress and individual communication.

\section{General Discussion}

We have shown that dramatic changes in iterated learning performance can emerge from incremental, quantitative changes in the capacity for communication between individuals. These dynamics appear under quite general assumptions in line with prior models of human learning and communication: Bayesian learning and posterior passing over a rate-limited channel. The exact number of bits that each regime occurs in will depend on the details of the task, but the general trends are robust across different parameter choices. Our results indicate that we may not need to posit large, discontinuous changes to explain substantial differences in what can be learned over many generations. Rather, humans' distinctive capacity for iterated learning could be the result of gradual domain-general improvements to the ability to learn from others.

These results also raise a theoretical problem: given that increases to channel rate are harmful at first, how did social learning evolve to begin with? Increasing the rate from 0 to 2 or 3 leads to worse performance after 20 generations, and social learning performs worse than individual learning with a uniform prior until a minimum channel rate can be achieved (around 6 bits). A solution to this problem might be that some of the necessary channel rate comes from generic learning abilities that evolved for non-social reasons \citep{heyes2015not}. Early social species might have already been in the regime where increases to channel rate are helpful for learning.

Our modeling framework is well-suited to the study of iterated learning for two reasons. It is simple to implement the Bayesian ideal for individual learning, as there is a closed-form solution for a learner's posterior distribution given its prior and observed data. This makes it easy to isolate the effect of communication capacity on iterated learning performance. Furthermore, the space of possible beliefs is discrete and small enough that we can use the Blahut-Arimoto algorithm to compute optimal channels exactly.

However, our task still represents an idealized version of iterated learning, as human reasoning rarely reflects optimal Bayesian inference. Instead, people generally rely on heuristics and approximations \citep{tversky1974judgment,lieder2020resource}. There is also evidence that improvements to information processing capacity within individual minds produce distinctively human capacities for learning and reasoning \citep{cantlon2024uniquely}. Our model may understate the extent to which changes in the individual ability to learn asocially and make new inventions contribute to human culture. Furthermore, incremental improvement of an estimate is a limited form of cumulative learning compared to the historical process of inventing new technologies based on prior innovations. Individual intelligence may be important to explaining the open-endedness of human cultural variation, as measured by the range and variety of culturally-learned skills \citep{morgan2024human}. Future work should study the interplay between communication capacity and individual intelligence in more complex, open-ended domains.

Our model of iterated learning as constrained by a rate limit on individual communication is similar in spirit to resource-rational analysis \citep{lieder2020resource} in both the way it frames the object of study and specific computational methods it uses. Resource-rational analysis frames human cognition as maximizing performance at tasks subject to limits on cognitive resources. Similarly, our model frames iterated learning as maximizing performance on the learning task subject to limited communication between individuals. One framing of resource-rational analysis represents cognitive costs as mutual information between different parts of the mind and uses the Blahut-Arimoto algorithm to compute solutions \citep{arumugam2022rate,arumugam2024bayesian}. Applying resource-rational analysis, possibly in its rate-distortion formulation, to understand other social systems is a promising direction for future work.

The stark differences between humans' capacity for iterated learning and those of other animals can be explained via incremental changes to individuals' ability to communicate and learn from others. Our modeling results tell us that, when seeking to explain humans' unique role in the world, we need not look for a single discontinuity between humans and other animals. Instead, incremental changes in social learning capacity govern whether received information is useless, useful to a limited extent, or increasingly useful generation after generation.

\bibliographystyle{apalike}
\bibliography{citations}

\appendix

\section{Extended preliminaries}
\label{app:preliminaries}

In this section, we provide brief background on the core information-theoretic foundations underlying this work. For any natural number $N \in \bN$, we denote the index set as $[N] \triangleq \{1,2,\ldots,N\}$. For any set $\mc{X}$, $\Delta(\mc{X})$ denotes the set of all probability distributions with support on $\mc{X}$. For any two sets $\mc{X}$ and $\mc{Y}$, we denote the class of all functions mapping from $\mc{X}$ to $\mc{Y}$ as $\{\mc{X} \ra \mc{Y}\} \triangleq \{f \mid f:\mc{X} \ra \mc{Y}\}$. As our exposition throughout the paper will consistently refer to bits of information, all logarithms will be in base $2$.

\subsection{Information theory}
\label{sec:info_theory}

Here we introduce core concepts in probability theory and information theory~\citep{shannon1948mathematical} used throughout this paper. See \citet{cover2012elements,gray2011entropy,polyanskiy2022IT} and \citet{duchi23ItLectNotes} for more background. 

We define the mutual information between any two random variables $X,Y$ through the Kullback-Leibler (KL) divergence~\citep{kullback1951information}: $$\bI(X;Y) = \kl{p(X, Y)}{p(X) \times p(Y)} \qquad \kl{p}{q} = \begin{cases} \bE_p\left[\log\left(\frac{p(X)}{q(X)}\right)\right] & p \ll q \\ +\infty & p \not\ll q \end{cases},$$ where $p$ and $q$ are both probability distributions on the same set $\mc{X}$. An analogous definition of conditional mutual information holds through the expected KL-divergence for any three random variables $X,Y,Z$:
$$\bI(X;Y \mid Z) = \bE\left[\kl{p(X, Y \mid Z)}{p(X \mid Z) \times p(Y \mid Z)}\right].$$
With these definitions in hand, we can define the entropy and conditional entropy for any two random variables $X,Y$ as $$\bH(X) = \bI(X;X) \qquad \bH(Y \mid X) = \bH(Y) - \bI(X;Y).$$ This yields the following identities for mutual information and conditional mutual information for any three arbitrary random variables $X$, $Y$, and $Z$:
$$\bI(X;Y) = \bH(X) - \bH(X \mid Y) = \bH(Y) - \bH(Y | X), \qquad \bI(X;Y|Z) = \bH(X|Z) - \bH(X \mid Y,Z) = \bH(Y|Z) - \bH(Y | X,Z).$$
Through the chain rule of the KL-divergence and the fact that $\kl{p}{p} = 0$ for any probability distribution $p$, we obtain another equivalent definition of mutual information, $$\bI(X;Y) = \bE\left[\kl{p(Y \mid X)}{p(Y)}\right],$$ as well as the chain rule of mutual information: $\bI(X;Y_1,\ldots,Y_n) = \sum\limits_{i=1}^n \bI(X;Y_i \mid Y_1,\ldots,Y_{i-1}).$ The so-called ``golden formula'' of mutual information (see Theorem 4.1 and Corollary 4.2 of \citep{polyanskiy2022IT}) establishes that, for any distribution $q(Y) \in \Delta(\mc{Y})$, we have $$\bI(X; Y) = \bE\left[\kl{p(Y \mid X)}{q(Y)}\right] - \kl{p(Y)}{q(Y)},$$ which implies that $$\bI(X;Y) = \inf\limits_{q(Y)} \bE\left[\kl{p(Y \mid X)}{q(Y)}\right],$$ when $\bI(X;Y) < \infty$.

Although the KL-divergence is widely used, it is asymmetric. In cases where there is no obvious natural direction in which to compute the KL-divergence, we can use the Jeffreys divergence (see $I_2$ in Equation 1 of \citet{jeffreys1946invariant} or, perhaps more transparently, Equation 3.2 of \citet{kullback1959information}) $D_J$, which is a symmetrized version of KL divergence. $$\jdiv{p}{q} = \kl{p}{q} + \kl{q}{p}$$ where $p$ and $q$ are two arbitrary probability distributions with support on $\mc{X}$. 

We conclude this section with a series of classic results from information theory which we use to establish our main theoretical result. 

\begin{fact}[Data-Processing Inequality~\citep{cover2012elements}]
For any Markov chain $X - Y - Z$, $\bI(X;Z) \leq \bI(X; Y).$
\label{fact:dpi}
\end{fact}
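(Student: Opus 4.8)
The plan is to prove the inequality by expanding the joint mutual information $\bI(X; Y, Z)$ in two different ways with the chain rule of mutual information, and then exploiting both the Markov structure and the non-negativity of conditional mutual information. The only ingredients needed are the chain rule $\bI(X; Y_1, Y_2) = \bI(X; Y_1) + \bI(X; Y_2 \mid Y_1)$ and the definition of conditional mutual information as an expected KL-divergence, both recorded in Section~\ref{sec:info_theory}.

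First I would apply the chain rule with the grouping $(Y_1, Y_2) = (Y, Z)$ and then again with $(Y_1, Y_2) = (Z, Y)$, yielding the two identities $$\bI(X; Y, Z) = \bI(X; Y) + \bI(X; Z \mid Y) \quad\text{and}\quad \bI(X; Y, Z) = \bI(X; Z) + \bI(X; Y \mid Z).$$ Next I would invoke the hypothesis that $X - Y - Z$ is a Markov chain, which by definition means $X$ and $Z$ are conditionally independent given $Y$, i.e. $p(x, z \mid y) = p(x \mid y)\,p(z \mid y)$. Substituting this into the definition $\bI(X; Z \mid Y) = \bE\left[\kl{p(X, Z \mid Y)}{p(X \mid Y) \times p(Z \mid Y)}\right]$ makes the two arguments of the KL-divergence equal, so $\bI(X; Z \mid Y) = \bE\left[\kl{p}{p}\right] = 0$, and the first identity collapses to $\bI(X; Y, Z) = \bI(X; Y)$. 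Since the KL-divergence is non-negative, so is $\bI(X; Y \mid Z)$, and the second identity then gives $\bI(X; Z) = \bI(X; Y, Z) - \bI(X; Y \mid Z) \leq \bI(X; Y, Z) = \bI(X; Y)$, which is the claim.

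There is no genuine obstacle here; the only point requiring a word of care is the degenerate case in which some of the quantities above are $+\infty$. If $\bI(X; Y) = +\infty$ the inequality holds trivially, and otherwise every term in the two chain-rule decompositions is finite, so the subtraction step is legitimate and the argument goes through verbatim. One could alternatively give a more ``hands-on'' proof by writing $\bI(X; Z) = \bE\left[\kl{p(Z \mid X)}{p(Z)}\right]$ and observing that $Z$ is produced from $Y$ by a fixed channel $p(z \mid y)$ not depending on $X$, then appealing to convexity of the KL-divergence to show this quantity cannot exceed $\bI(X; Y)$; but the chain-rule argument is shorter and relies only on identities already stated above, so that is the route I would take.
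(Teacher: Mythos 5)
Your proof is correct, and it is exactly the double chain-rule expansion the paper itself uses to prove the four-variable generalization (Fact~\ref{fact:dpi_4rv}); the paper simply cites \citet{cover2012elements} for the two-variable case rather than reproving it. Your handling of the Markov condition ($\bI(X;Z \mid Y) = 0$), the non-negativity step, and the infinite-value caveat are all sound, so nothing further is needed.
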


\begin{fact}[Data-Processing Inequality on Four Random Variables]
For any Markov chain $U - X - Y -Z$, $\bI(U; Z) \leq \bI(X;Y).$
\label{fact:dpi_4rv}
\end{fact}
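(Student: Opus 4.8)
The plan is to deduce the four-variable statement from the ordinary Data-Processing Inequality (Fact~\ref{fact:dpi}) by applying it twice, after first observing that the hypothesis $U - X - Y - Z$ induces two shorter Markov chains that are exactly of the form Fact~\ref{fact:dpi} handles.

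First I would unpack the definition of the Markov chain $U - X - Y - Z$, namely that the joint factorizes as $p(u,x,y,z) = p(u)\,p(x \mid u)\,p(y \mid x)\,p(z \mid y)$. From this factorization two facts follow by elementary marginalization. (i) Marginalizing out $U$ leaves $p(x,y,z) = p(x)\,p(y\mid x)\,p(z\mid y)$, so $X - Y - Z$ is a Markov chain. (ii) Marginalizing out $Y$ gives $p(z \mid u,x) = \sum_y p(z\mid y)\,p(y\mid u,x) = \sum_y p(z\mid y)\,p(y\mid x) = p(z\mid x)$, so $Z$ is conditionally independent of $U$ given $X$, i.e.\ $U - X - Z$ is a Markov chain. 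Verifying these two reductions cleanly is the only place any real care is needed; everything after is a direct invocation of Fact~\ref{fact:dpi}.

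Next I would apply Fact~\ref{fact:dpi} to the chain $U - X - Z$ in its reversed orientation $Z - X - U$ (also a Markov chain, since the Markov property is symmetric under reversal), obtaining $\bI(Z;U) \leq \bI(Z;X)$, that is, $\bI(U;Z) \leq \bI(X;Z)$. Then I would apply Fact~\ref{fact:dpi} to the chain $X - Y - Z$ directly, obtaining $\bI(X;Z) \leq \bI(X;Y)$. Chaining the two inequalities yields $\bI(U;Z) \leq \bI(X;Z) \leq \bI(X;Y)$, which is the claim.

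I do not anticipate a genuine obstacle: the statement is a routine corollary of the classical inequality, so the only ``hard'' part is being pedantically explicit about which sub-chains are invoked and why they hold. If one prefers to avoid the reversed form of the DPI, an alternative route is to note that $(U,X) - Y - Z$ is a Markov chain, apply Fact~\ref{fact:dpi} to get $\bI(U,X;Z) \leq \bI(U,X;Y)$, then use the chain rule of mutual information with $\bI(U;Y\mid X)=0$ (from $U - X - Y$) to rewrite $\bI(U,X;Y) = \bI(X;Y)$, and finish with the monotonicity $\bI(U;Z) \leq \bI(U,X;Z)$. Either derivation is short; the write-up should simply make the chosen reduction explicit.
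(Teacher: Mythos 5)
Your proof is correct, but it takes a genuinely different route from the paper's. You derive the two sub-chains $U - X - Z$ and $X - Y - Z$ from the four-variable chain and then chain two applications of the ordinary data-processing inequality (Fact~\ref{fact:dpi}), passing through the intermediate quantity $\bI(X;Z)$ to get $\bI(U;Z) \leq \bI(X;Z) \leq \bI(X;Y)$. The paper instead expands the single joint mutual information $\bI(U,X;Y,Z)$ via the chain rule in two different orders: one expansion collapses to $\bI(X;Y)$ exactly (because $\bI(X;Z\mid Y) = 0$ and $\bI(U;Y,Z\mid X) = 0$ under the Markov assumption), while the other is lower-bounded by $\bI(U;Z)$ using non-negativity of conditional mutual information. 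Your approach is the more modular one --- it reuses Fact~\ref{fact:dpi} as a black box and only requires verifying the two sub-chains, and it yields the slightly more refined two-step bound through $\bI(X;Z)$; the one subtlety you correctly flag is that the first application needs the reversed orientation $Z - X - U$, which is legitimate since conditional independence is symmetric. The paper's argument is self-contained (it does not invoke Fact~\ref{fact:dpi} at all) and delivers the cleaner intermediate identity $\bI(U,X;Y,Z) = \bI(X;Y)$, at the cost of a longer chain-rule bookkeeping. Your sketched alternative in the final paragraph, via $\bI(U,X;Y) = \bI(X;Y)$, is closer in spirit to the paper's decomposition but still not identical; any of these routes is a complete proof.
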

\begin{proof}
    We begin by examining the mutual information $\bI(U, X; Y, Z)$ and applying the chain rule of mutual information twice to obtain $$\bI(U, X; Y, Z) = \bI(X; Y, Z) + \bI(U; Y, Z \mid X) = \bI(X; Y) + \bI(X; Z \mid Y) + \bI(U; Y, Z \mid X).$$ Observe that the Markov chain $U - X - Y - Z$ implies the Markov chain $X - Y - Z$. By definition, a Markov chain implies that the ``past'' and ``future'' are conditionally independent given the ``present.'' Consequently, the Markov chain $X - Y - Z$ implies $\bI(X; Z \mid Y) = 0$. Similarly, define the random variable $W = (Y, Z)$ and observe that the Markov chain $U - X - Y - Z$ implies the Markov chain $U - X - W$, such that $\bI(U; Y, Z \mid X) = \bI(U; W \mid X) = 0$. Thus, we have that $$\bI(U, X; Y, Z) = \bI(X; Y) + \ubr{\bI(X; Z \mid Y)}_{= 0} + \ubr{\bI(U; Y, Z \mid X)}_{= 0} = \bI(X; Y).$$ Meanwhile, expanding the original mutual information term via the chain rule of information again yields $$\bI(U, X; Y, Z) = \bI(U; Y, Z) + \bI(X; Y, Z \mid U) = \bI(U; Z) + \ubr{\bI(U; Y \mid Z)}_{\geq 0} + \ubr{\bI(X; Y, Z \mid U)}_{\geq 0} \geq \bI(U; Z),$$ where the inequality follows by the non-negativity of (conditional) mutual information. Putting everything together, we have established that $$\bI(U; Z) \leq \bI(U, X; Y, Z) = \bI(X;Y),$$ as desired.
\end{proof}

\begin{fact}[Fano's Inequality~\citep{fano1952TransInfoLectNotes}]
Let $V$ be a discrete random variable taking values on a finite set $\mc{V}$ and consider any Markov chain $V - X - \widehat{V}$. If $V$ is uniformly distributed ($V \sim \text{Uniform}(\mc{V})$), then
\begin{align*}
    \bP(V \neq \widehat{V}) &\geq 1 - \frac{\bI(V; \widehat{V}) + \log(2)}{\log(|\mc{V}|)} \\
    &\geq 1 - \frac{\bI(V; X) + \log(2)}{\log(|\mc{V}|)},
\end{align*}
where the final inequality follows via the data-processing inequality (Fact \ref{fact:dpi}). Moreover, if $P_{\mathrm{max}} \triangleq \max\limits_{v \in \mc{V}} \bP(V = v)$, then $$\bP(V \neq \widehat{V}) \leq 1 - \frac{\bI(V; X) + \log(2)}{\log(\frac{1}{P_{\mathrm{max}}})}.$$
\label{fact:fano}
\end{fact}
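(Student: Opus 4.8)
The plan is to derive Theorem~\ref{thm:cultural_fano} by combining the two inequalities of Fano's inequality (Fact~\ref{fact:fano}) with a data-processing argument that bounds $\bI(V;\widehat{V}_{T+1})$ in two complementary ways — one reflecting how much gets pushed through the sequence of lossy channels, the other reflecting how much task-relevant information is available to pump in. Since $V \sim \text{Uniform}(\mc{V})$, Fano's inequality gives $\bP(V \neq \widehat{V}_{T+1}) \geq 1 - \frac{\bI(V;\widehat{V}_{T+1}) + \log(2)}{\log(|\mc{V}|)}$, so the whole task reduces to showing
\begin{align*}
    \bI(V;\widehat{V}_{T+1}) \leq \min\left(\sum_{t=1}^T \bI(\widetilde{B}_t, X_t; \widetilde{B}_{t+1}),\; \bI(V;\widetilde{B}_1) + \sum_{t=1}^T \bI(V; X_t \mid \widetilde{B}_t)\right).
\end{align*}

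First I would pin down the Markov structure implied by the generative story in the setup: $V$ generates $X_t$ (i.i.d. given $V$); $\widetilde{B}_{t+1}$ is produced from $(\widetilde{B}_t, X_t)$ via Bayesian updating followed by the rate-distortion channel, so $V - (\widetilde{B}_t, X_t) - \widetilde{B}_{t+1}$ and, more globally, $V - \widetilde{B}_{T+1} - \widehat{V}_{T+1}$ (the final guess depends on $V$ only through the last received belief). For the first bound, data-processing gives $\bI(V;\widehat{V}_{T+1}) \leq \bI(V;\widetilde{B}_{T+1})$, and then I would telescope: because $V - (\widetilde{B}_t, X_t) - \widetilde{B}_{t+1}$ is a Markov chain, a further application of data-processing (Fact~\ref{fact:dpi} or the four-variable version, Fact~\ref{fact:dpi_4rv}) yields $\bI(V; \widetilde{B}_{t+1}) \leq \bI(\widetilde{B}_t, X_t; \widetilde{B}_{t+1})$ for each $t$; taking $t = T$ already gives one term of the sum, but to get the full sum I would instead argue inductively that the transmitted information cannot exceed the per-step channel throughput accumulated so far — i.e., that $\bI(V;\widetilde{B}_{t+1}) \leq \sum_{s=1}^t \bI(\widetilde{B}_s, X_s; \widetilde{B}_{s+1})$ — using at each step that $\bI(V;\widetilde{B}_{t+1}) \leq \bI(\widetilde{B}_t, X_t; \widetilde{B}_{t+1})$ together with the trivial monotonicity of the partial sums. (Alternatively one can write $\bI(V;\widetilde{B}_{t+1}) \leq \bI(\widetilde{B}_t;\widetilde{B}_{t+1}) + \bI(X_t;\widetilde{B}_{t+1}\mid \widetilde{B}_t)$ and recurse.)

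For the second bound I would expand $\bI(V;\widehat{V}_{T+1}) \leq \bI(V;\widetilde{B}_{T+1})$ via the chain rule along the sequence of beliefs. Writing the joint information $\bI(V; \widetilde{B}_1, \ldots, \widetilde{B}_{T+1})$ and applying the chain rule of mutual information gives $\bI(V;\widetilde{B}_1) + \sum_{t=1}^T \bI(V;\widetilde{B}_{t+1}\mid \widetilde{B}_1,\ldots,\widetilde{B}_t)$. Since $\widetilde{B}_{t+1}$ is a (possibly stochastic) function of $(\widetilde{B}_t, X_t)$ — it does not depend on the earlier beliefs except through $\widetilde{B}_t$ — the conditional term $\bI(V;\widetilde{B}_{t+1}\mid \widetilde{B}_1,\ldots,\widetilde{B}_t)$ is bounded by $\bI(V; X_t \mid \widetilde{B}_t)$: intuitively, all the $V$-information entering $\widetilde{B}_{t+1}$ beyond what is already in $\widetilde{B}_t$ must come through the fresh observation $X_t$, and given $\widetilde{B}_t$ the variable $X_t$ is independent of the earlier history. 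Making this precise requires a conditional data-processing step and the observation that conditioning on the extra past beliefs does not increase the relevant mutual information once $\widetilde{B}_t$ is fixed (using that $X_t \perp (\widetilde{B}_1,\ldots,\widetilde{B}_{t-1}) \mid (V,\widetilde{B}_t)$ and $X_t \perp (\widetilde{B}_1,\ldots,\widetilde{B}_{t-1})\mid \widetilde{B}_t$). Finally, $\bI(V;\widehat{V}_{T+1})$ is at most the minimum of the two bounds since it is simultaneously at most each, and plugging this into Fano finishes the proof.

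\textbf{Main obstacle.} The delicate part is the second bound: justifying $\bI(V;\widetilde{B}_{t+1}\mid \widetilde{B}_1,\ldots,\widetilde{B}_t) \leq \bI(V; X_t\mid \widetilde{B}_t)$ cleanly. One must be careful about which conditional independences the generative process actually guarantees — in particular that conditioning on the extra history $\widetilde{B}_1,\ldots,\widetilde{B}_{t-1}$ can be dropped rather than merely bounded — and a conditional version of the data-processing inequality (as in Fact~\ref{fact:dpi_4rv}, applied within the conditional measure given $\widetilde{B}_t$) is the right tool. The first bound is comparatively routine once the Markov chain $V - (\widetilde{B}_t,X_t) - \widetilde{B}_{t+1}$ is in hand; the only subtlety there is the bookkeeping of the telescoping/induction to convert a single-step data-processing bound into the full sum.
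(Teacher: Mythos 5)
Your proposal does not address the statement it was assigned. The statement is Fano's inequality itself (Fact~\ref{fact:fano}): the lower bound $\bP(V \neq \widehat{V}) \geq 1 - \frac{\bI(V;\widehat{V}) + \log(2)}{\log(|\mc{V}|)}$ for a Markov chain $V - X - \widehat{V}$ with $V$ uniform, its weakening to the $\bI(V;X)$ form, and the companion upper bound in terms of $P_{\mathrm{max}}$. What you have written is instead a plan for deriving Theorem~\ref{thm:cultural_fano}, and your very first step is to \emph{invoke} Fano's inequality as a known fact. The result you were asked to establish therefore appears in your argument only as a black box; nothing in the proposal establishes it. (The paper itself also does not reprove Fact~\ref{fact:fano} --- it cites Equations 3.18 and 3.19 of Polyanskiy and Wu --- but a proof attempt for this statement has to supply the argument rather than assume it.)

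The standard proof of the first inequality introduces the error indicator $E = \mathbf{1}\{V \neq \widehat{V}\}$ and expands $\bH(E, V \mid \widehat{V})$ two ways to get $\bH(V \mid \widehat{V}) \leq \bH(E) + \bP(V \neq \widehat{V})\log(|\mc{V}|-1) \leq \log(2) + \bP(V \neq \widehat{V})\log(|\mc{V}|)$; since $V$ is uniform, $\bH(V \mid \widehat{V}) = \log(|\mc{V}|) - \bI(V;\widehat{V})$, and rearranging gives the first display. The second line then follows because $\bI(V;\widehat{V}) \leq \bI(V;X)$ by the data-processing inequality applied to $V - X - \widehat{V}$, which only loosens the bound. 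The converse bound involving $\log(1/P_{\mathrm{max}})$ needs a separate argument (e.g., the divergence form of Fano in the cited reference) and is not touched by the entropy expansion above. As a side remark: read as a sketch of Theorem~\ref{thm:cultural_fano}, your plan is broadly aligned with the paper's Appendix~\ref{app:fano-proof} for the second term of the minimum, while for the first term your single-step route $\bI(V;\widetilde{B}_{T+1}) \leq \bI(\widetilde{B}_T, X_T;\widetilde{B}_{T+1}) \leq \sum_{t=1}^T \bI(\widetilde{B}_t, X_t;\widetilde{B}_{t+1})$ (using non-negativity of the remaining summands) is valid and simpler than the paper's chain-rule decomposition --- but none of that discharges the obligation to prove Fact~\ref{fact:fano} itself.
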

Proofs for both statements of Fano's inequality can be found as Equations 3.18 and 3.19 of \citet{polyanskiy2022IT}. A more accessible proof of the first inequality (which is more commonly recognized as Fano's inequality) that proceeds via entropy definitions instead of the data-processing inequality can be found as Proposition 2.3.3 of \citep{duchi23ItLectNotes}.

Finally, we note that there is a natural corollary of Fano's inequality that conditions on an arbitrary third random variable. For ease of exposition, we will refer to this conditional variant as Fano's inequality as well, which can be proven by applying the traditional Fano's inequality pointwise --- that is, for $\bP(V \neq \widehat{V} \mid Z =z)$ and $\bI(V; \widehat{V} \mid Z = z)$ to any $z \in \mc{Z}$ --- and then taking expectations on both sides.

\begin{fact}[Conditional Fano's Inequality]
    Let $V$ be a discrete random variable taking values on a finite set $\mc{V}$ and let $Z$ be an arbitrary random variable such that, conditioned on $Z$, we have a Markov chain $V - X - \widehat{V}$. If $V$ is uniformly distributed ($V \sim \text{Uniform}(\mc{V})$), then
\begin{align*}
    \bP(V \neq \widehat{V}) &\geq 1 - \frac{\bI(V; \widehat{V} \mid Z) + \log(2)}{\log(|\mc{V}|)} \\
    &\geq 1 - \frac{\bI(V; X \mid Z) + \log(2)}{\log(|\mc{V}|)}.
\end{align*}
\label{fact:cond_fano}
\end{fact}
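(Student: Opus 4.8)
The plan is to reduce the claim to the unconditional Fano inequality (Fact~\ref{fact:fano}) applied ``slice by slice'' --- once inside each conditional law $\bP(\cdot \mid Z = z)$ --- and then to average these slice-wise bounds over $Z$. Since conditional mutual information and the marginal error probability are both \emph{exactly} averages over $Z$ of their conditioned-on-$Z=z$ counterparts, no Jensen-type step is needed; the whole argument is linear in the slice-wise quantities.

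First I would fix an arbitrary $z$ in the support of $Z$ and work entirely inside the conditional measure $\bP(\cdot \mid Z = z)$. By hypothesis, conditioned on $Z = z$ the triple $(V, X, \widehat{V})$ forms a Markov chain $V - X - \widehat{V}$, and $V$ is uniform on $\mc{V}$. Applying the unconditional Fano inequality of Fact~\ref{fact:fano} to this conditional law, and the data-processing inequality (Fact~\ref{fact:dpi}) inside the same conditional law (so that $\bI(V; \widehat{V} \mid Z = z) \leq \bI(V; X \mid Z = z)$), gives
\begin{align*}
    \bP(V \neq \widehat{V} \mid Z = z) &\geq 1 - \frac{\bI(V; \widehat{V} \mid Z = z) + \log(2)}{\log(|\mc{V}|)} \\
    &\geq 1 - \frac{\bI(V; X \mid Z = z) + \log(2)}{\log(|\mc{V}|)}.
\end{align*}
Then I would take $\bE_{Z}[\cdot]$ of both sides. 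On the left, the tower property gives $\bE_{Z}[\bP(V \neq \widehat{V} \mid Z = z)] = \bP(V \neq \widehat{V})$. On the right, linearity of expectation together with the definition of conditional mutual information, $\bI(V; \widehat{V} \mid Z) = \bE_{Z}[\bI(V; \widehat{V} \mid Z = z)]$ (and likewise for $\bI(V; X \mid Z)$), reproduces exactly the two inequalities in the statement.

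The step I expect to be the main obstacle --- really the only point requiring care --- is the legitimacy of the slice-wise invocation of Fact~\ref{fact:fano}: both of its hypotheses must survive conditioning on $Z = z$. The Markov chain $V - X - \widehat{V}$ does so by assumption, but one must also know that $V$ remains uniform on $\mc{V}$ under $\bP(\cdot \mid Z = z)$, i.e.\ that $Z$ carries no marginal information about which distribution in $\{P_v\}_{v\in\mc{V}}$ is the truth; this is what lets us treat $\log(|\mc{V}|)$ as $\bH(V \mid Z = z)$ in the Fano estimate. The cleanest way to expose exactly where this is used is to run the entropy form of Fano slice-wise, $\bH(V \mid \widehat{V}, Z = z) \leq \log(2) + \bP(V \neq \widehat{V} \mid Z = z)\log(|\mc{V}|)$, average over $Z$, substitute $\bH(V \mid \widehat{V}, Z) = \bH(V \mid Z) - \bI(V; \widehat{V} \mid Z)$, and finish with $\bH(V \mid Z) = \log(|\mc{V}|)$; the data-processing step $\bH(V \mid \widehat{V}, Z) \geq \bH(V \mid X, Z)$ then delivers the second inequality.
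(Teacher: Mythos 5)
Your proof is correct and is exactly the route the paper takes: it establishes Fact~\ref{fact:cond_fano} by applying the unconditional Fano inequality (Fact~\ref{fact:fano}) pointwise to each slice $Z = z$ and then taking expectations, using that both the error probability and the conditional mutual information are averages of their slice-wise counterparts. Your additional observation --- that the pointwise step requires $V$ to remain uniform under $\bP(\cdot \mid Z = z)$, i.e.\ $\bH(V \mid Z) = \log(|\mc{V}|)$, so that $Z$ carries no information about $V$ --- identifies a hypothesis the paper leaves implicit, and you are right to flag it as the one point requiring care.
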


\subsection{The Blahut-Arimoto algorithm}
\label{sec:blahut-arimoto}

Conveniently, the optimization problem in the distortion-rate function is convex~\citep{chiang2004geometric}. The seminal Blahut-Arimoto algorithm solves this problem using the (unconstrained) Lagrangian of the constrained optimization problem~\citep{blahut1972computation,arimoto1972algorithm,csiszar1974computation,boyd2004convex}:
$$
\inf\limits_{p(\hat{X} \mid X)} \bI(X, \hat{X}) + \beta \cdot \bE\left[d(X, \hat{X})\right].
$$
For ease of exposition, we focus our discussion in this section on the computation of the rate-distortion function, which is the dual of the distortion-rate function:
$$
\mc{R}(D) = \inf\limits_{p(\hat{X} \mid X)} \bI(X;\hat{X}) \text{ such that }  \bE\left[d(X,\hat{X})\right] \leq D.
$$ 
The derivation is near-identical for the distortion-rate function and the Blahut-Arimoto algorithm ultimately comes out the same.  Thus, rather than dealing with an explicit rate limit $R$ or distortion threshold $D$, the Blahut-Arimoto algorithm is an alternating-maximization algorithm governed by a hyperparameter $\beta \in \bR_{\geq 0}$, with lower $\beta$ values representing a preference for minimizing rate and higher $\beta$ values representing a preference for minimizing expected distortion. Geometrically, this $\beta$ corresponds to the (negated) slope of a tangent line to the rate-distortion function, which is known to be non-negative, convex, and non-increasing in $D$. Thus, to each specific value of $\beta$, the Blahut-Arimoto algorithm computes the channel $p(\widehat{X} \mid X)$ associated with the point on the rate-distortion curve that coincides with tangent line of slope $-\beta$. Relating a particular distortion threshold $D \in \bR_{\geq 0}$ to a concrete value of $\beta$ follows from the fact that strong duality holds~\citep{csiszar1974extremum}: $$\mc{R}(D) = \sup\limits_{\beta \in \bR_{\geq 0}} \left[\left(\inf\limits_{p(\hat{X} \mid X)} \bI(X, \hat{X}) + \beta \cdot \bE\left[d(X, \hat{X})\right]\right) - \beta \cdot D\right].$$ From a computational perspective, the Blahut-Arimoto algorithm operates under the assumption that all random variables involved in the lossy compression problem are discrete, however the algorithm itself does apply for abstract random variables~\citep{csiszar1974computation,csiszar1974extremum}.

Beginning with the Lagrangian induced from the original rate-distortion optimization, $$\inf\limits_{p(\hat{X} \mid X)} \bI(X, \hat{X}) + \beta \cdot \bE\left[d(X, \hat{X})\right],$$ the Blahut-Arimoto algorithm first proceeds by introducing a variational approximation $q(\widehat{X}) \in \Delta(\widehat{\mc{X}})$ to the true marginal distribution $p(\widehat{X} = \widehat{x}) = \bE\left[p(\widehat{X} = \widehat{x} \mid X)\right]$. Due to the golden formula of mutual information (see Section \ref{sec:info_theory}), the Lagrangian can be rewritten as a bi-level optimization problem $$\inf\limits_{p(\hat{X} \mid X)} \bI(X, \hat{X}) + \beta \cdot \bE\left[d(X, \hat{X})\right] = \inf\limits_{p(\hat{X} \mid X)} \inf\limits_{q(\widehat{X})} \bE_{X \sim p}\left[\kl{p(\widehat{X} \mid X)}{q(\widehat{X})}\right] + \beta \cdot \bE\left[d(X, \hat{X})\right],$$ with separate updates for the channel $p(\widehat{X} \mid X)$ and channel marginal approximation $q(\widehat{X})$. For any fixed channel $p(\widehat{X} \mid X)$, we have that $$\inf\limits_{q(\widehat{X})} \bE_{X \sim p}\left[\kl{p(\widehat{X} \mid X)}{q(\widehat{X})}\right] = \inf\limits_{q(\widehat{X})} \bI(X; \widehat{X}) + \kl{p(\widehat{X})}{q(\widehat{X})} = \inf\limits_{q(\widehat{X})} \kl{p(\widehat{X})}{q(\widehat{X})} = 0,$$ where $p(\widehat{X})$ is the marginal induced by the fixed channel $p(\widehat{X} \mid X)$ and the infimum is achieved when $q(\widehat{X} = \widehat{x}) = p(\widehat{X} = \widehat{x}) = \bE\left[p(\widehat{X} = \widehat{x} \mid X)\right]$ for all $\widehat{x} \in \widehat{\mc{X}}$.

For a fixed channel marginal $q(\widehat{X})$, the update for the channel requires a slightly more involved derivation. With the channel marginal fixed, we aim to solve $$\inf\limits_{p(\hat{X} \mid X)} \bE_{X \sim p}\left[\kl{p(\widehat{X} \mid X)}{q(\widehat{X})}\right] + \beta \cdot \bE\left[d(X, \hat{X})\right] = -\sup\limits_{p(\hat{X} \mid X)} \bE\left[-\beta d(X, \hat{X})\right] - \bE_{X \sim p}\left[\kl{p(\widehat{X} \mid X)}{q(\widehat{X})}\right],$$ where the equation holds since we've only multiplied by $1 = (-1) \cdot (-1)$ and brought one negative into the infimum, changing it to a supremum. Thus, the channel update amounts to finding the distribution that achieves this supremum. Next, we leverage a variational form of the KL-divergence known as the Donsker-Varadhan representation~\citep{donsker1983asymptotic} which, for two distributions $p,q \in \Delta(\mc{X})$, establishes that $$\kl{p}{q} = \sup\limits_{g \in \{\mc{X} \ra \bR\}} \bE_{X \sim p}\left[g(X)\right] - \log\left(\bE_{X \sim q}\left[\exp\left(g(X)\right)\right]\right).$$ As the supremum can be lower bounded by any choice of function $g \in \{\mc{X} \ra \bR\}$, we can apply this lower bound to the KL-divergence term and re-arrange terms to rewrite the previous equation as:
\begin{center}
\resizebox{\textwidth}{!}{%
$\begin{aligned}
    \sup\limits_{p(\hat{X} \mid X)} \bE\left[-\beta d(X, \hat{X})\right] - \bE_{X \sim p}\left[\kl{p(\widehat{X} \mid X)}{q(\widehat{X})}\right] &=\sup\limits_{p(\hat{X} \mid X)} \bE_{X \sim p}\left[\bE_{\widehat{X} \sim p(\cdot \mid X)}\left[-\beta d(X, \hat{X})\right] - \kl{p(\widehat{X} \mid X)}{q(\widehat{X})}\right] \\
    &\leq \bE_{X \sim p}\left[\log\left(\bE_{\widehat{X} \sim q}\left[\exp\left(-\beta d(X,\widehat{X})\right)\right]\right)\right].
\end{aligned}$}
\end{center}

Since the supremum is over all possible channels, consider the channel $$p(\widehat{X} \mid X) = \frac{q(\widehat{X})\exp\left(-\beta d(X,\widehat{X})\right)}{\bE_{\widehat{X} \sim q}\left[\exp\left(-\beta d(X,\widehat{X})\right)\right]}$$ and observe that 
\begin{align*}
    \bE_{X \sim p}&\left[\kl{p(\widehat{X} \mid X)}{q(\widehat{X})}\right] = \bE_{X \sim p}\left[\bE_{\widehat{X} \sim p(\cdot \mid X)}\left[\log\left(\frac{p(\widehat{X} \mid X)}{q(\widehat{X})}\right)\right]\right] \\
    &= \bE_{X \sim p}\left[\bE_{\widehat{X} \sim p(\cdot \mid X)}\left[\log\left(p(\widehat{X} \mid X)\right)\right] - \bE_{\widehat{X} \sim p(\cdot \mid X)}\left[\log\left(q(\widehat{X})\right)\right]\right] \\
    &= \bE_{X \sim p}\left[\bE_{\widehat{X} \sim p(\cdot \mid X)}\left[\log\left(q(\widehat{X})\right)\right] + \bE_{\widehat{X} \sim p(\cdot \mid X)}\left[\log\left(\frac{\exp\left(-\beta d(X,\widehat{X})\right)}{\bE_{\widehat{X} \sim q}\left[\exp\left(-\beta d(X,\widehat{X})\right)\right]}\right)\right]  - \bE_{\widehat{X} \sim p(\cdot \mid X)}\left[\log\left(q(\widehat{X})\right)\right]\right] \\
    &= \bE_{X \sim p}\left[\bE_{\widehat{X} \sim p(\cdot \mid X)}\left[\log\left(\frac{\exp\left(-\beta d(X,\widehat{X})\right)}{\bE_{\widehat{X} \sim q}\left[\exp\left(-\beta d(X,\widehat{X})\right)\right]}\right)\right] \right] \\
    &= \bE_{X \sim p}\left[\bE_{\widehat{X} \sim p(\cdot \mid X)}\left[-\beta d(X, \hat{X})\right] - \log\left(\bE_{\widehat{X} \sim q}\left[\exp\left(-\beta d(X,\widehat{X})\right)\right]\right)\right].
\end{align*}

Substituting back into the earlier expression, we obtain
\begin{align*}
    \bE_{X \sim p}\left[\bE_{\widehat{X} \sim p(\cdot \mid X)}\left[-\beta d(X, \hat{X})\right] - \kl{p(\widehat{X} \mid X)}{q(\widehat{X})}\right] &=\bE_{X \sim p}\left[\log\left(\bE_{\widehat{X} \sim q}\left[\exp\left(-\beta d(X,\widehat{X})\right)\right]\right)\right].
\end{align*}

Thus, we've shown $$\bE_{X \sim p}\left[\log\left(\bE_{\widehat{X} \sim q}\left[\exp\left(-\beta d(X,\widehat{X})\right)\right]\right)\right] = \sup\limits_{p(\hat{X} \mid X)} \bE_{X \sim p}\left[\bE_{\widehat{X} \sim p(\cdot \mid X)}\left[-\beta d(X, \hat{X})\right] - \kl{p(\widehat{X} \mid X)}{q(\widehat{X})}\right],$$ (also known as the Gibbs variational principle - see Proposition 4.7 of \citep{polyanskiy2022IT}) and that the channel $$p(\widehat{X} \mid X) = \frac{q(\widehat{X})\exp\left(-\beta d(X,\widehat{X})\right)}{\bE_{\widehat{X} \sim q}\left[\exp\left(-\beta d(X,\widehat{X})\right)\right]}$$ achieves the supremum, thereby also achieving the infimum in the original Lagrangian of the rate-distortion function. Consequently, the Blahut-Arimoto algorithm emerges as iterating the pair of alternating update equations between the channel and channel marginal, respectively: $$q(\widehat{X}) = \bE\left[p(\widehat{X} \mid X)\right], \qquad p(\widehat{X} \mid X) = \frac{q(\widehat{X})\exp\left(-\beta d(X,\widehat{X})\right)}{\bE_{\widehat{X} \sim q}\left[\exp\left(-\beta d(X,\widehat{X})\right)\right]}.$$

\section{Results with different model variants}
\label{app:variants}

To test how robust our findings are to different formulations of our model, we consider three variants of the model: a variant with a non-uniform source distribution, a variant where each individual makes more observations, and a variant that uses a different true probability vector.

\subsection{Non-uniform source distribution}

First, we consider a model where the source probability decreases exponentially in the sum of the pseudocounts. This variant favors parameters that correspond to relatively flat distributions with high entropy, as it assigns higher probability to distributions with low pseudocounts. The source probability is proportional to the following:
\begin{align*}
    p(\boldsymbol{\alpha}) &\propto 1.1^{-(\alpha_1 + \alpha_2 + \alpha_3)}
\end{align*}

This source distribution has the additional advantage of ensuring that the source distribution is well-defined over all possible Dirichlet parameter vectors. With exponentially decreasing probability, we could construct a distribution that takes support over all possible pseudocounts, not just those less than or equal to 20.

Figure~\ref{fig:decreasing-variant} shows the results of simulations with the decreasing-probability model variant. These results are virtually identical to the results with a uniform source distribution, suggesting that the behavior of our simulated populations is robust over reasonable choices of source distribution.

\begin{figure}
    \centering
    \includegraphics[width=0.48\linewidth]{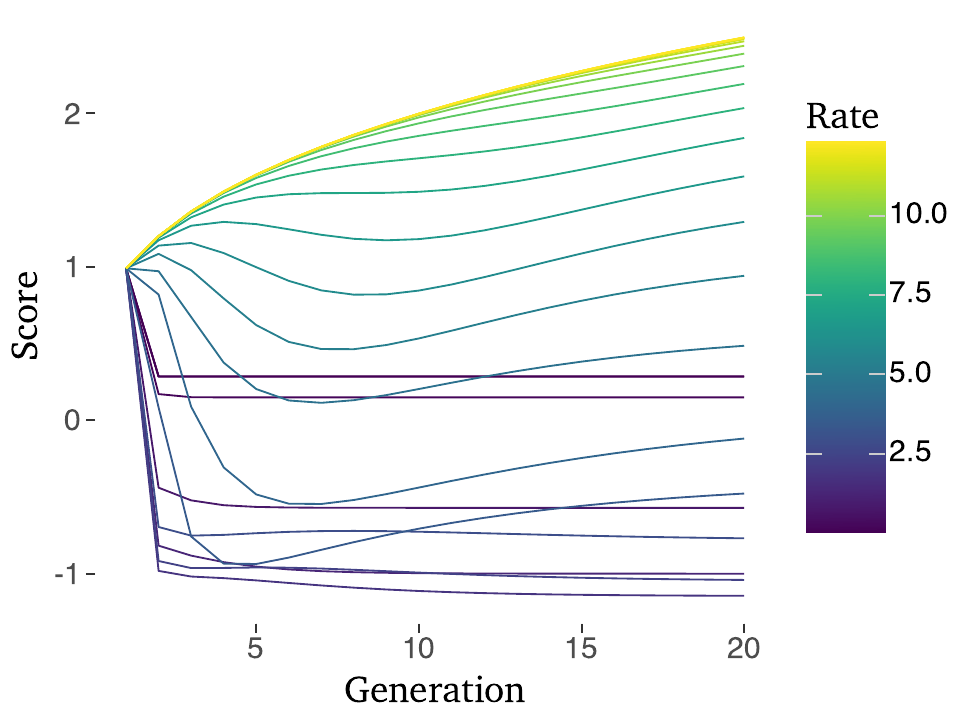}
    \includegraphics[width=0.48\linewidth]{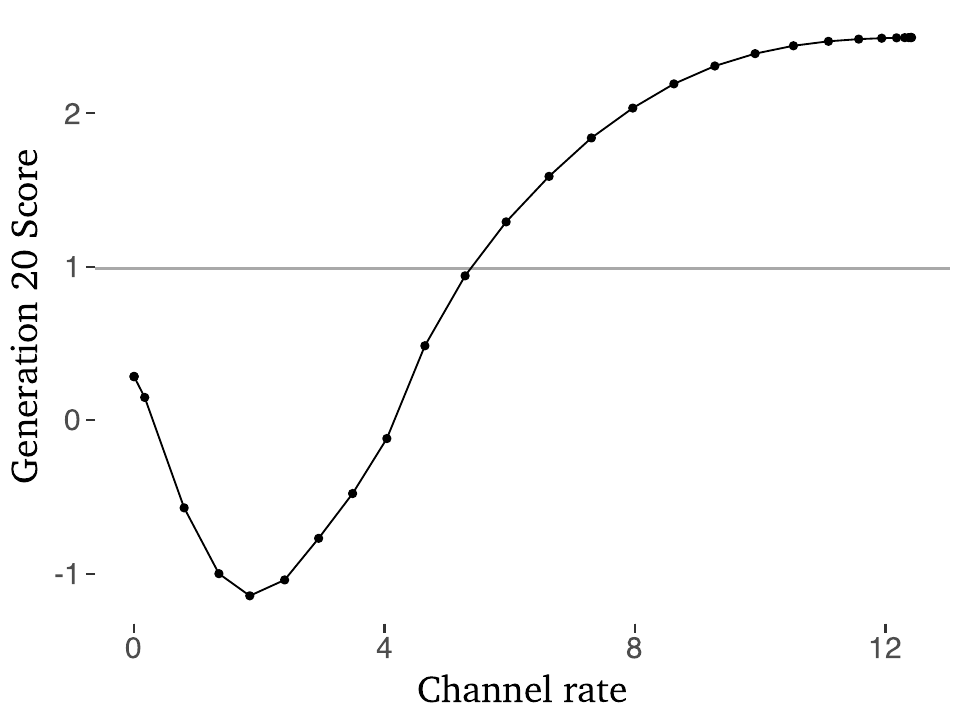}
    \caption{Performance by generation and performance at last generation by rate for the model variant with exponentially decreasing source probability}
    \label{fig:decreasing-variant}
\end{figure}

\subsection{Longer individual lifespan}

In this model variant, we increase each learner's effective individual lifespan by giving them two samples from the true distribution instead of one. We run chains of length 10 rather than 20, to prevent the agents from exceeding the maximum parameter value. Figure~\ref{fig:longlife-variant} shows results for this variant. Again, the results with this variant are very close to those of the original model, though learning happens faster as each agent receives more evidence.

\begin{figure}
    \centering
    \includegraphics[width=0.48\linewidth]{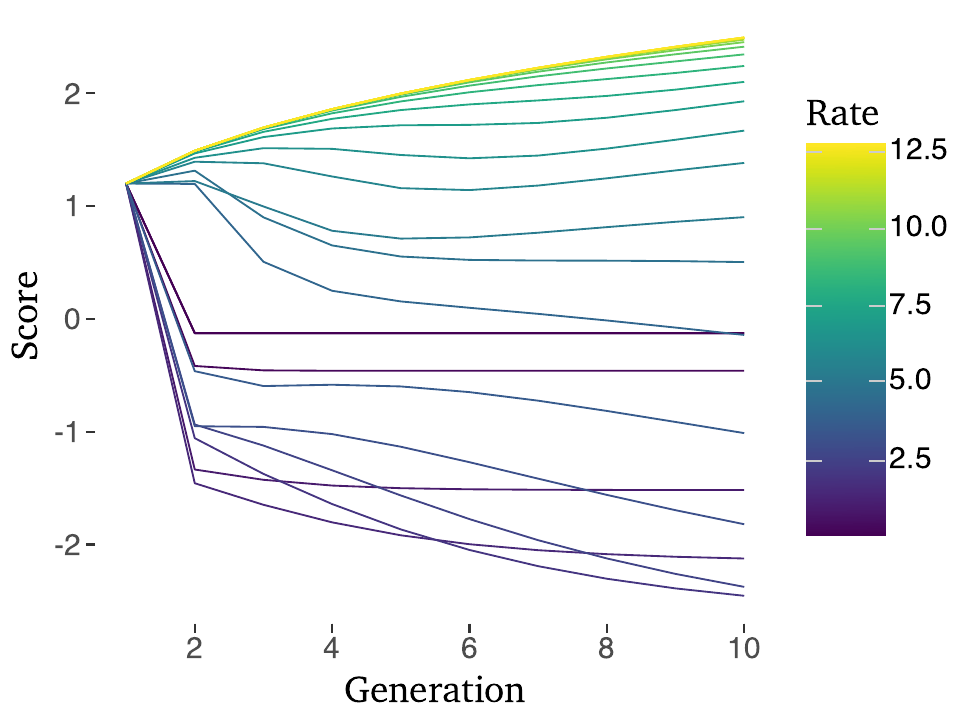}
    \includegraphics[width=0.48\linewidth]{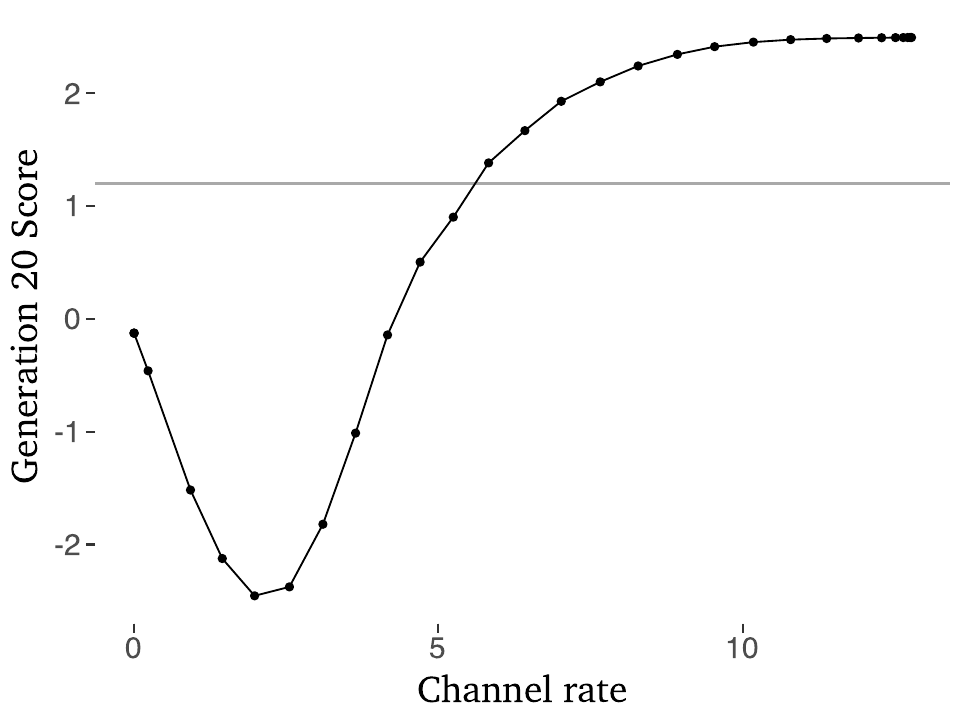}
    \caption{Performance by generation and performance at last generation by rate for the model variant with two observations per agent rather than one}
    \label{fig:longlife-variant}
\end{figure}

\subsection{Different true probability vectors}

\begin{figure}
    \centering
    \includegraphics[width=0.48\linewidth]{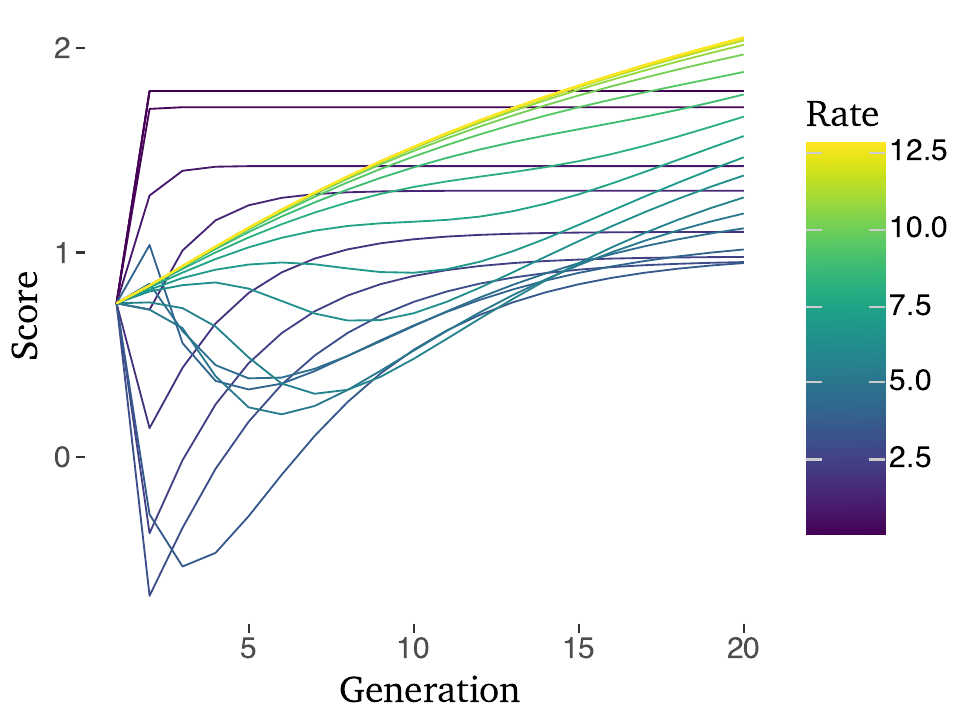}
    \includegraphics[width=0.48\linewidth]{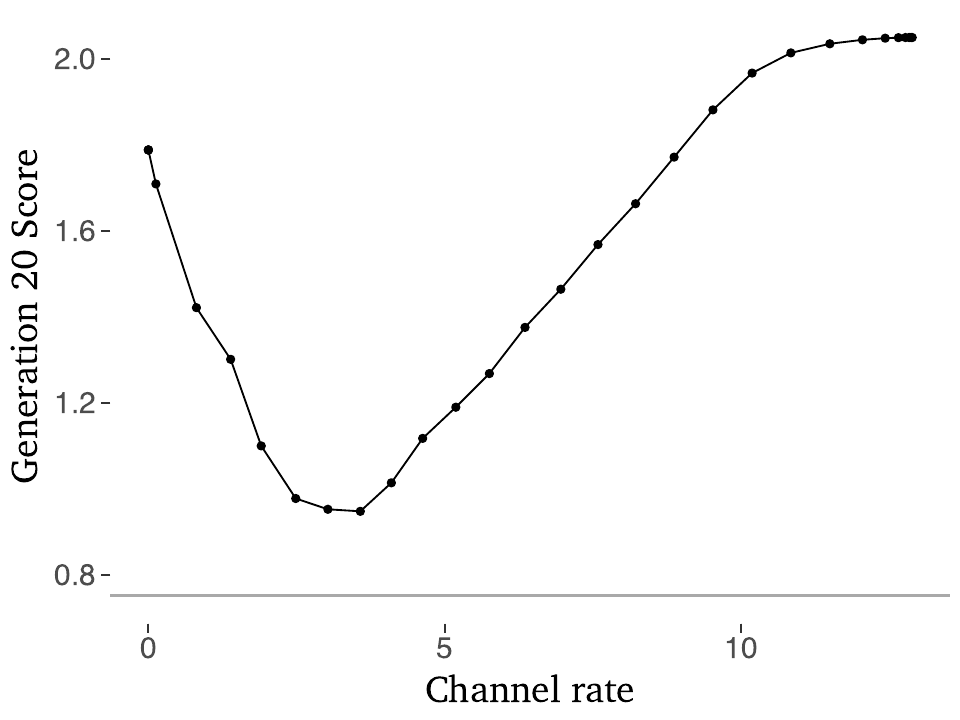}
    \caption{Performance by generation and performance at last generation by rate for the model variant with true probabilities closer to uniform: $[0.5, 0.25, 0.25]$}
    \label{fig:closertrueprobs-variant}
\end{figure}

\begin{figure}
    \centering
    \includegraphics[width=0.48\linewidth]{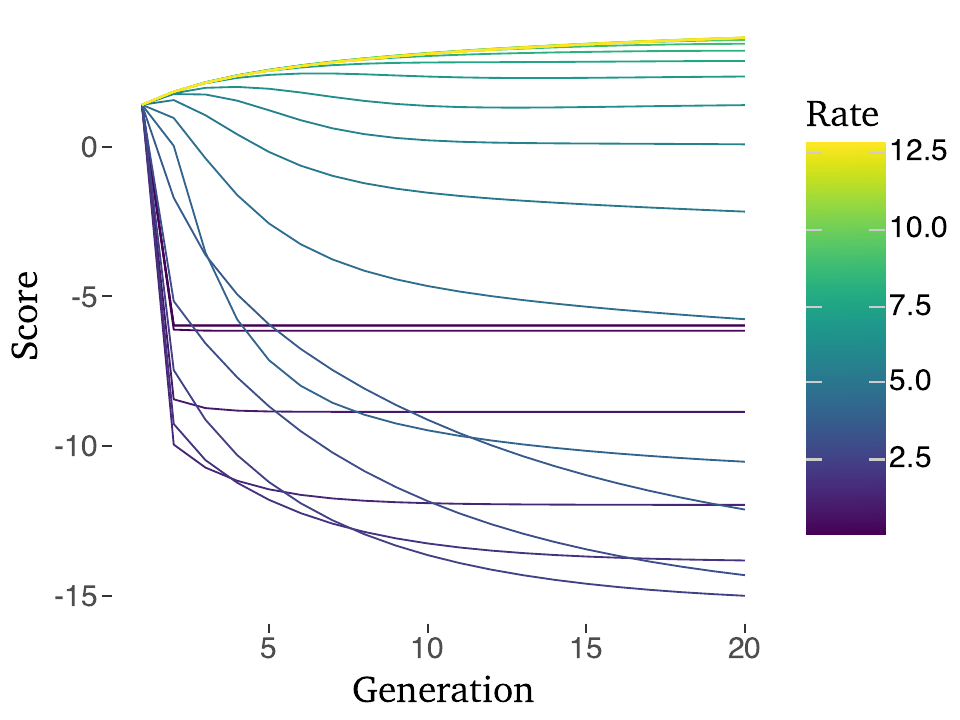}
    \includegraphics[width=0.48\linewidth]{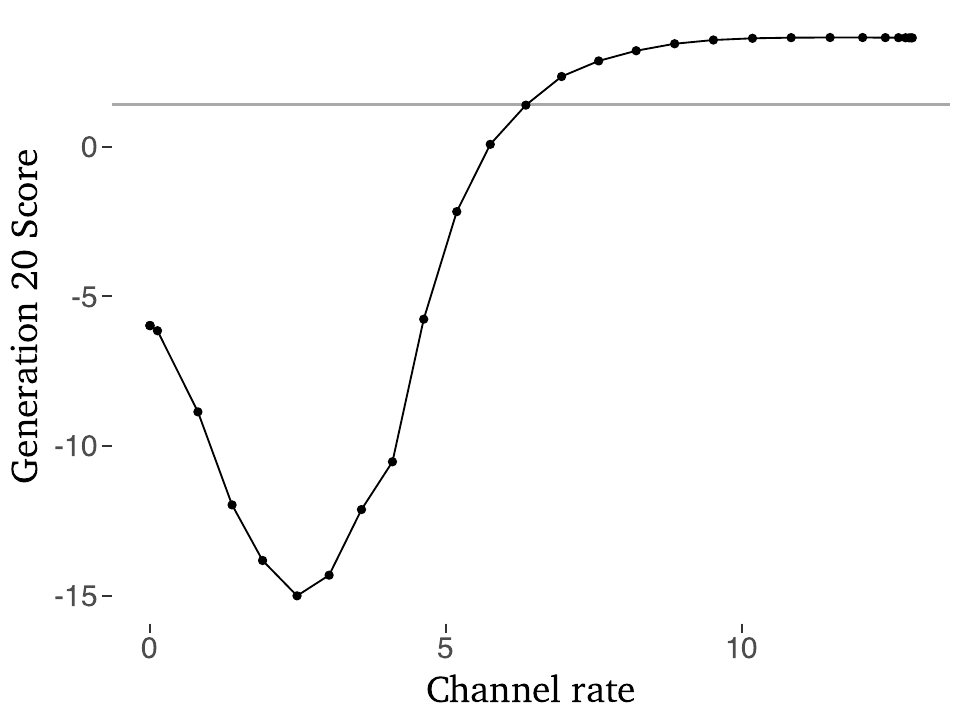}
    \caption{Performance by generation and performance at last generation by rate for the model variant with true probabilities farther from uniform: $[0.9, 0.05, 0.05]$}
    \label{fig:farthertrueprobs-variant}
\end{figure}

We test the trajectory of iterated learning for two different sets of true probabilities. We choose $\mathbf{p} = [0.5, 0.25, 0.25]$ as an closer to the center of the Dirichlet simplex and $\mathbf{p} = [0.9, 0.05, 0.05]$ as an example further away. 

For $\mathbf{p} = [0.5, 0.25, 0.25]$, the overall pattern of performance by generation and rate looks somewhat different to the pattern in the main results. Chains of learners out-perform individuals in generation 1 even with very low rate limits. These probabilities are closer to the uniform distribution $p = [\frac{1}{3}, \frac{1}{3}, \frac{1}{3}]$. Channels with low rates consistently map parameters to distributions that are peaked around the center, so lossy communication happens to be helpful. Of course, this only works when the true state of the world happens to be close to what the channel distorts beliefs toward.

For $\mathbf{p} = [0.9, 0.05, 0.05]$, the pattern of performance looks like a more dramatic version of the main results. The true probabilities are even further from uniform than in the main simulations, so the pattern of results is more dramatic. Lossy communication is even worse: a rate limit near 0 leads to a score of -5 compared to a score just under 0 in the main simulations.
\section{Analytic representation of simulations}
\label{app:analytic}

The model described in Section~\ref{sec:model} describes individual agents learning and passing beliefs from one to the next. While we can simulate individual agents, we can also analytically compute the proportion of agents who will hold each belief at each time point. This section describes how.

First, we define a belief prevalence distribution as a vector $\mathbf{b}_t$ as a distribution over possible beliefs that an agent at time $t$ could have. Since the space of beliefs is finite and discrete, we can represent $\mathbf{b}_t$ as a vector in $\mathbb{R}^n$, where $n$ is the number of distinct beliefs (parameter combinations) in the space. Each element of the vector corresponds to a different belief.

We can then represent the updates to beliefs that result from observation via an \textit{observation transition matrix} $O \in \mathbb{R}^{n \times n}$. We compute the beliefs for timestep $t+1$ by multiplying the beliefs from the prior timestep by $O$. That is, $$\mathbf{b}_{t+1} = O\mathbf{b}_t$$ 

When agents transmit beliefs from one generation to the next, we multiply the posterior belief prevalence vector $\mathbf{b}_{t+1}$ by the channel matrix $C$. Therefore, the belief prevalence distribution after $g$ generations of iterated learning is
\begin{align*}
    \mathbf{b}_g &= O(CO)^{g-1}\mathbf{b}_0
\end{align*}

We can simply perform these multiplications, then average the score of each belief after each generation to compute the expected score in each generation.

\section{Selective social learning}
\label{app:ssl}

\begin{figure}
    \centering
    \includegraphics[width=0.61\linewidth]{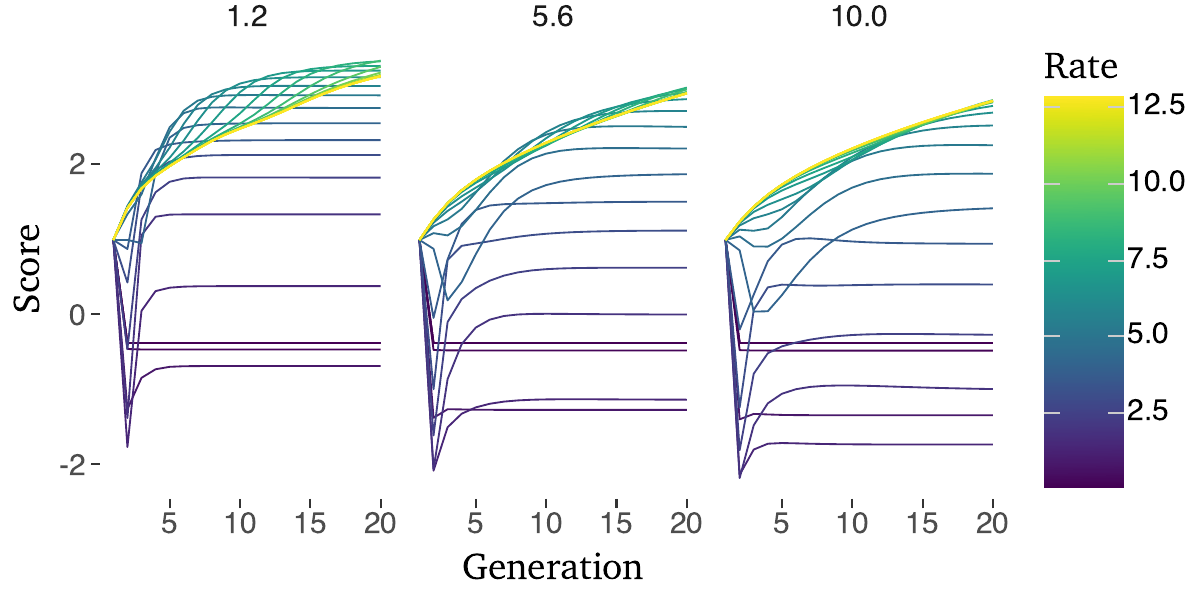}
    \includegraphics[width=0.38\linewidth]{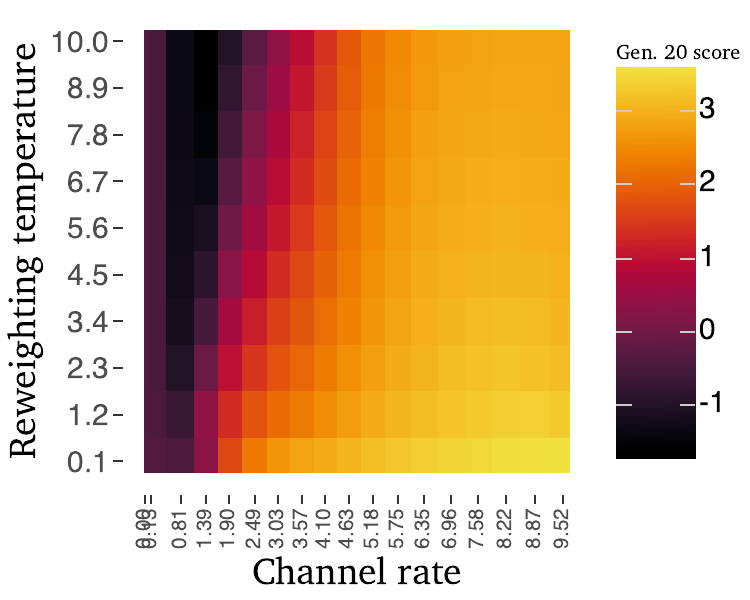}
    \caption{Effect of selective social learning in the rate limited iterated learning model. Left: performance by generation for three different reweighting temperatures. Right: performance after 20 generations for each combination of rate and reweighting temperature.}
    \label{fig:selective-social-learning}
\end{figure}

While our main analyses focus on populations of learners choosing who to learn from randomly, some prior work indicates that selective social learning can be important to the success of iterated learning in large populations \citep{thompson2022complex}. We can implement selective social learning by reweighting the probability of choosing to receive a belief from a particular learner based on the score achieved by that belief. We can model the probability of a learner choosing to learn from an with a particular belief $\boldsymbol{\alpha}$ in the prior generation as the product of the proportion of learners with that belief in the prior generation $p_{T-1}(\boldsymbol{\alpha})$ and the exponentiated score achieved by that belief.
\begin{equation}
    \label{eq:reweighting}
    p_L(\boldsymbol{\alpha}) \propto p_{T-1}(\boldsymbol{\alpha}) \cdot \exp\left(\frac{1}{\tau}\text{score}(\boldsymbol{\alpha})\right)
\end{equation}
The selective social learning temperature $\tau$ controls the strength of selective social learning. A low temperature would lead learners to overwhelmingly choose the best-performing belief, while a high temperature would lead learners to choose a belief based only on its prevalence in the previous generation.

We vary the selective social learning temperature to study how different combinations of temperature and communication rate give rise to different patterns of iterated learning. The results of this analysis are shown in Figure~\ref{fig:selective-social-learning}. The left panel shows performance by generation and rate for three temperatures, while the right panel shows a heatmap of performance after 20 generations for all reweighting temperatures and rates up to 10. 

A rate of at least about 2 is needed to see any iterated learning progress, but low reweighting temperatures (corresponding to strong selective social learning) can make up for a low communication rate to an extent. This is intuitive: the performance signals of members of the previous generation provide information about the true distribution, so less information is needed from the previous learner to estimate the true distribution well. For low reweighting temperatures, intermediate rate limits actually produce faster learning than high rate limits, though they often hit plateaux after a few generations. This can happen because rate-limited channels often map beliefs to more confident beliefs. When those beliefs are confident but wrong, the next generation selects against them. When they are confidently right, the next generation selects for them and benefits from them. This is only helpful when the temperature is low enough to reliably select against against the confidently wrong beliefs, as otherwise they would hinder the overall performance of the population.

\section{Cultural Fano's Inequality proof}
\label{app:fano-proof}

Recall that we model cultural learning over a total of $T$ time periods via probabilistic graphical model over random variables. Suppose observations provided to an agent all come from a set $\mc{X}$ and are drawn from some true distribution in $\Delta(\mc{X})$. Following the standard techniques for minimax risk lower bounds (see Chapter 9.2 of \citep{duchi23ItLectNotes}), we discretize the space of distributions down to $\{P_v\}_{v \in \mc{V}} \subset \Delta(\mc{X})$, each of which is indexed by some $v \in \mc{V}$. We model nature as choosing a data-generating distribution according to $V \sim \text{Uniform}(\mc{V})$. In doing so, we reduce the problem of estimating the underlying data-generating distribution down to a hypothesis testing problem~\citep{yu1997assouad,yang1999information} across $|\mc{V}|$ possible distributions, $\{P_v\}_{v \in \mc{V}}$. Under this so-called canonical hypothesis testing problem, data is then generated i.i.d. from the distribution $P_V \in \Delta(\mc{X})$.

Having reduced the task of cultural learning to identifying the underlying distribution indexed by $V$, it follows that an agent begins with a prior belief over nature's choice $\widetilde{B}_1$\footnote{We use the $\sim$ superscript when writing down prior beliefs for consistency even though it is not compressed and, in keeping with our notation, $B_1 = \widetilde{B}_1$.}. An agent's initial guess as to nature's choice of distribution under prior belief $B_1$ is represented by the random variable $\widehat{V}_1$. Given a first observation $X_1$, the agent with prior $\widetilde{B}_1$ arrives at a true posterior distribution $B_2$. As this individual agent is subject to rate-limited communication of their beliefs to the next learner in the subsequent generation, solving the corresponding lossy-compression problem results in compressed beliefs $\widetilde{B}_2$. Under these rate-limited beliefs, the next agent approaches the learning problem with a guess for nature's choice of distribution represented by $\widehat{V}_2$. Observe that this process may now continue across each time period $t \in [T]$, where the agent begins with prior beliefs given by $\widetilde{B}_t$, observes data drawn from nature's true distribution $X_t$, updates their posterior beliefs to $B_{t+1}$, communicates lossily-compressed beliefs $\widetilde{B}_{t+1}$, and finally synthesizes an updated guess as to nature's choice of distribution $\widehat{V}_{t+1}$.

We prove our main result, Theorem \ref{thm:cultural_fano}, by invoking Fano's inequality~\citep{fano1952TransInfoLectNotes}, which we previously outlined as Fact \ref{fact:fano}. Our use of Fano's inequality allows us to relate the probability of the final agent's guess for nature's choice of distribution $\widehat{V}_{T+1}$ matching reality $V$ with the individual rate-limited communications and beliefs maintained across the constituent $T$ generations. Future work may find it fruitful to extend and explore our analysis using alternative technical approaches or generalizations of Fano's inequality~\citep{aeron2010information,duchi2013distance}.

\setcounter{theorem}{0}

\begin{theorem}
    For iterated learning over $T$ total time periods, we have $$\bP(V \neq \widehat{V}_{T+1}) \geq 1 - \frac{\min\left(\sum\limits_{t=1}^T \bI(\widetilde{B}_t, X_t; \widetilde{B}_{t+1}), \bI(V; \widetilde{B}_1) + \sum\limits_{t=1}^T \bI(V; X_t \mid \widetilde{B}_{t})\right) + \log(2)}{\log(|\mc{V}|)}.$$
\end{theorem}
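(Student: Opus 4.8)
The plan is to reduce everything to Fano's inequality and then bound a single mutual information two different ways. Applying Fact~\ref{fact:fano} to $V$ and $\widehat{V}_{T+1}$ (using that $V \sim \text{Uniform}(\mc{V})$, with the full sequence of observations and beliefs serving as the intermediate variable in the Markov chain), we immediately get $\bP(V \neq \widehat{V}_{T+1}) \geq 1 - \frac{\bI(V; \widehat{V}_{T+1}) + \log(2)}{\log(|\mc{V}|)}$. It then suffices to show that $\bI(V; \widehat{V}_{T+1})$ is at most each of the two quantities inside the $\min$; substituting the smaller bound into the Fano estimate finishes the theorem. Both bounds come from chasing mutual information through the graphical model, in which $\widehat{V}_{t+1}$ depends only on $\widetilde{B}_{t+1}$, $\widetilde{B}_{t+1}$ is a channel-randomized function of $B_{t+1}$, and $B_{t+1}$ is the Bayes update of $(\widetilde{B}_t, X_t)$, with all channel randomness independent of $V$.

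For the first bound I would isolate the last generation. The joint distribution factors as $p(v)\,p(\widetilde{b}_T, x_T \mid v)\,p(\widetilde{b}_{T+1} \mid \widetilde{b}_T, x_T)\,p(\widehat{v}_{T+1} \mid \widetilde{b}_{T+1})$, which exhibits the Markov chain $V - (\widetilde{B}_T, X_T) - \widetilde{B}_{T+1} - \widehat{V}_{T+1}$. The four-variable data-processing inequality (Fact~\ref{fact:dpi_4rv}) then gives $\bI(V; \widehat{V}_{T+1}) \leq \bI(\widetilde{B}_T, X_T; \widetilde{B}_{T+1})$, and since every term $\bI(\widetilde{B}_t, X_t; \widetilde{B}_{t+1})$ is non-negative, this is at most $\sum_{t=1}^T \bI(\widetilde{B}_t, X_t; \widetilde{B}_{t+1})$, the first expression in the $\min$.

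For the second bound I would instead peel off one generation at a time. Data-processing along $V - \widetilde{B}_{T+1} - \widehat{V}_{T+1}$ (Fact~\ref{fact:dpi}) gives $\bI(V; \widehat{V}_{T+1}) \leq \bI(V; \widetilde{B}_{T+1})$. Then, for each $t$, the Markov chain $V - (\widetilde{B}_t, X_t) - \widetilde{B}_{t+1}$ together with the chain rule of mutual information yields $\bI(V; \widetilde{B}_{t+1}) \leq \bI(V; \widetilde{B}_t, X_t) = \bI(V; \widetilde{B}_t) + \bI(V; X_t \mid \widetilde{B}_t)$. Iterating this from $t = T$ down to $t = 1$ (formally, an induction on the number of generations) telescopes to $\bI(V; \widetilde{B}_{T+1}) \leq \bI(V; \widetilde{B}_1) + \sum_{t=1}^T \bI(V; X_t \mid \widetilde{B}_t)$, the second expression in the $\min$. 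Taking the minimum of the two bounds and plugging into Fano completes the proof.

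The main obstacle is not any individual calculation but making the graphical-model bookkeeping airtight: one must state precisely which conditional independencies the iterated-learning process encodes --- in particular that at step $t$ the Bayes update and the compression channel are functions of $(\widetilde{B}_t, X_t)$ alone with $V$-independent internal randomness --- so that each invocation of the two- and four-variable data-processing inequalities is licensed, and one must keep the index shifts in the telescoping recursion straight, ideally by writing it out as an explicit induction.
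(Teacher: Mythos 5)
Your proposal is correct, and it takes a genuinely different route from the paper at both steps. The paper applies the \emph{conditional} Fano inequality (Fact~\ref{fact:cond_fano}) conditioned on $\widetilde{B}_1$, so its target quantity is $\bI(V;\widehat{V}_{T+1}\mid\widetilde{B}_1)$ rather than your $\bI(V;\widehat{V}_{T+1})$. For the first bound in the minimum, the paper exhibits the Markov chain $V-\{X_t\}_{t=1}^T-\{\widetilde{B}_t\}_{t=2}^{T+1}-\widehat{V}_{T+1}$ given $\widetilde{B}_1$, applies Fact~\ref{fact:dpi_4rv}, and then works through two chain-rule expansions of the resulting joint mutual information $\bI(\{X_t\}_{t=1}^T;\{\widetilde{B}_t\}_{t=2}^{T+1}\mid\widetilde{B}_1)$ to distribute it over the per-generation terms $\bI(\widetilde{B}_t,X_t;\widetilde{B}_{t+1})$. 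You instead apply Fact~\ref{fact:dpi_4rv} only to the last-generation chain $V-(\widetilde{B}_T,X_T)-\widetilde{B}_{T+1}-\widehat{V}_{T+1}$ and then pad the remaining $T-1$ nonnegative terms onto the right-hand side. This is shorter and in fact proves the strictly stronger statement $\bI(V;\widehat{V}_{T+1})\leq\bI(\widetilde{B}_T,X_T;\widetilde{B}_{T+1})$, but the price is that the cumulative sum in the theorem statement no longer carries any content beyond its last summand; the paper's chain-rule expansion is what justifies the sum as a genuine per-generation decomposition. For the second bound, the paper expands $\bI(V;\{\widetilde{B}_t\}_{t=1}^{T+1})$ via the chain rule, which forces it to handle conditional mutual informations against the full belief history $\{\widetilde{B}_{t'}\}_{t'=1}^t$ and then reduce them to conditioning on $\widetilde{B}_t$ alone via an entropy-monotonicity argument; your telescoping recursion $\bI(V;\widetilde{B}_{t+1})\leq\bI(V;\widetilde{B}_t)+\bI(V;X_t\mid\widetilde{B}_t)$ from the one-step Markov chain $V-(\widetilde{B}_t,X_t)-\widetilde{B}_{t+1}$ bypasses that detour entirely and is cleaner. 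Both routes rely on the same modeling assumptions you correctly flag (that the Bayes update and channel at step $t$ depend only on $(\widetilde{B}_t,X_t)$, with $V$-independent channel randomness), and both yield the theorem exactly as stated.
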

\begin{proof}
    We represent cultural learning via two separate Markov chains, both of which are conditioned on the prior belief $\widetilde{B}_1$. The first Markov chain $V - \{\widetilde{B}_t\}_{t=2}^{T+1} - \widehat{V}_{T+1}$, given $\widetilde{B}_1$, captures the fact that the only information about the true distribution indexed by $V$ coming through in the final prediction $\widehat{V}_{T+1}$ (aside from what is contained in the prior) is exclusively from the sequence $\{\widetilde{B}_t\}_{t=2}^{T+1}$ of lossy beliefs\footnote{The discrepancy in total time periods comes from having $T$ total generations of learning, each of which has as associated observation $\{X_t\}_{t \in [T]}$, followed by a final posterior belief $\widetilde{B}_{T+1}$ induced by the last observation $X_T$ and $\widetilde{B}_T$.}. Applying the traditional data-processing inequality (Fact \ref{fact:dpi}) to this Markov chain followed by the non-negativity of mutual information and chain rule of mutual information yields the inequality $$\bI(V;\widehat{V}_{T+1} \mid \widetilde{B}_1) \leq \bI(V; \{\widetilde{B}_t\}_{t=2}^{T+1} \mid \widetilde{B}_1) \leq \bI(V; \widetilde{B}_1) + \bI(V; \{\widetilde{B}_t\}_{t=2}^{T+1} \mid \widetilde{B}_1) = \bI(V; \{\widetilde{B}_t\}_{t=1}^{T+1}).$$ Meanwhile, given the prior $\widetilde{B}_1$, there is a second Markov chain $V - \{X_t\}_{t=1}^T - \{\widetilde{B}_t\}_{t=2}^{T+1} - \widehat{V}_{T+1}$ encapsulating how information about $V$ only flows into lossy beliefs through the observations $\{X_t\}_{t=1}^T$; note that $\widetilde{B}_1$ is not included in the chain as a (potentially informative) prior may carry information about $V$ that does not come from any observations $\{X_t\}_{t=1}^T$. Applying the data-processing inequality for a Markov chain with four random variables (Fact \ref{fact:dpi_4rv}) yields $$\bI(V;\widehat{V}_{T+1} \mid \widetilde{B}_1) \leq \bI(\{X_t\}_{t=1}^T; \{\widetilde{B}_t\}_{t=2}^{T+1} \mid \widetilde{B}_1).$$ Naturally, in the presence of two upper bounds for the same quantity, the tightest is obtained by taking a minimum: $$\bI(V;\widehat{V}_{T+1} \mid \widetilde{B}_1) \leq \min\left(\bI(\{X_t\}_{t=1}^T; \{\widetilde{B}_t\}_{t=2}^{T+1} \mid \widetilde{B}_1), \bI(V; \{\widetilde{B}_t\}_{t=1}^{T+1})\right).$$ Thus, applying this inequality after an application of the conditional Fano's inequality (Fact \ref{fact:cond_fano}) immediately yields
    \begin{align*}
        \bP(V \neq \widehat{V}_{T+1}) &\geq 1 - \frac{\bI(V; \widehat{V}_{T+1} \mid \widetilde{B}_1) + \log(2)}{\log(|\mc{V}|)}\\
        &\geq 1 - \frac{\min\left(\bI(\{X_t\}_{t=1}^T; \{\widetilde{B}_t\}_{t=2}^{T+1} \mid \widetilde{B}_1), \bI(V; \{\widetilde{B}_t\}_{t=1}^{T+1})\right) + \log(2)}{\log(|\mc{V}|)}.
    \end{align*}
    
    We focus on the first term in the minimum and look to show the inequality $$\bI(\{X_t\}_{t=1}^T; \{\widetilde{B}_t\}_{t=2}^{T+1} \mid \widetilde{B}_1) \leq \sum\limits_{t=1}^{T} \bI(\widetilde{B}_t, X_t; \widetilde{B}_{t+1}).$$ 

    Starting with the left-hand side of the desired inequality, we can apply the chain rule of mutual information to obtain the sum $$\bI(\{X_t\}_{t=1}^T;\{\widetilde{B}_t\}_{t=2}^{T+1} \mid \widetilde{B}_1) = \sum\limits_{t'=1}^T \bI(\{X_t\}_{t=1}^T; \widetilde{B}_{t'+1} \mid \{\widetilde{B}_{t}\}_{t=1}^{t'}).$$ Using the chain rule on each term in the summation and leveraging the non-negativity of mutual information, we obtain the inequality
    \begin{align*}
        \sum\limits_{t'=1}^T \bI(\{X_t\}_{t=1}^T; \widetilde{B}_{t'+1} \mid \{\widetilde{B}_{t}\}_{t=1}^{t'}) &= \sum\limits_{t'=1}^T \left(\bI(\{X_t\}_{t=1}^T, \{\widetilde{B}_{t}\}_{t=1}^{t'}; \widetilde{B}_{t'+1}) - \ubr{\bI(\{\widetilde{B}_{t}\}_{t=1}^{t'};\widetilde{B}_{t'+1})}_{\geq 0} \right) \\
        &\leq \sum\limits_{t'=1}^T \bI(\{X_t\}_{t=1}^T, \{\widetilde{B}_{t}\}_{t=1}^{t'}; \widetilde{B}_{t'+1}) \\
        &= \sum\limits_{t'=1}^T \bI(\widetilde{B}_{t'}, X_{t'}, \{X_t\}_{t \neq t'}, \{\widetilde{B}_{t}\}_{t=1}^{t'-1}; \widetilde{B}_{t'+1}).
    \end{align*}

    After re-organizing terms to isolate the  $\widetilde{B}_{t'}$ and $X_{t'}$ variables of each mutual information term in the sum, we apply the chain rule once more $$\sum\limits_{t'=1}^T \bI(\widetilde{B}_{t'}, X_{t'}, \{X_t\}_{t \neq t'}, \{\widetilde{B}_{t}\}_{t=1}^{t'-1}; \widetilde{B}_{t'+1}) = \sum\limits_{t'=1}^T \left(\bI(\widetilde{B}_{t'}, X_{t'}; \widetilde{B}_{t'+1}) + \bI(\{X_t\}_{t \neq t'}, \{\widetilde{B}_{t}\}_{t=1}^{t'-1} ; \widetilde{B}_{t'+1} \mid \widetilde{B}_{t'}, X_{t'}) \right).$$ 
    
    At this point, we recall that $\widetilde{B}_{t'+1}$ is a lossy compression of the (compressed) prior belief at the beginning of time period $t'$, $\widetilde{B}_{t'}$, and the observation obtained at time period $t'$, $X_{t'}$, whose underlying probability distribution is given by the channel that achieves the rate-distortion limit at time period $t'$ conditioned on the pair $(\widetilde{B}_{t'}, X_{t'})$. Consequently, there is no more information about $\widetilde{B}_{t'+1}$ accessible beyond what is contained in $(\widetilde{B}_{t'}, X_{t'})$ and we have that $\bI(\{X_t\}_{t \neq t'}, \{\widetilde{B}_{t}\}_{t=1}^{t'-1} ; \widetilde{B}_{t'+1} \mid X_{t'}, \widetilde{B}_{t'}) = 0$. Thus,
    \begin{align*}
        \bI(\{X_t\}_{t=1}^T;\{\widetilde{B}_t\}_{t=2}^{T+1} \mid \widetilde{B}_1) \leq \sum\limits_{t'=1}^T \left(\bI(\widetilde{B}_{t'}, X_{t'}; \widetilde{B}_{t'+1}) + \bI(\{X_t\}_{t \neq t'}, \{\widetilde{B}_{t}\}_{t=1}^{t'-1} ; \widetilde{B}_{t'+1} \mid \widetilde{B}_{t'}, X_{t'}) \right)
        = \sum\limits_{t=1}^{T} \bI(\widetilde{B}_t, X_t; \widetilde{B}_{t+1}).
    \end{align*}
    Putting everything together, we have shown $$\bI(\{X_t\}_{t=1}^T;\{\widetilde{B}_t\}_{t=2}^{T+1} \mid \widetilde{B}_1) \leq \sum\limits_{t=1}^{T} \bI(\widetilde{B}_t, X_t; \widetilde{B}_{t+1}),$$as desired.
    
    All that remains now is to handle the second term of the minimum, $\bI(V; \{\widetilde{B}_t\}_{t=1}^{T+1})$. By the chain rule of mutual information, we have $$\bI(V; \{\widetilde{B}_t\}_{t=1}^{T+1}) = \bI(V; \widetilde{B}_1) + \sum\limits_{t=1}^T \bI(V; \widetilde{B}_{t+1} \mid \{\widetilde{B}_{t'}\}_{t'=1}^t).$$ Recall once again that the compressed posterior obtained at the end of each time period $\widetilde{B}_{t+1}$ is a stochastic function (representing an application of the channel that achieves the rate-distortion limit) which consumes the prior $\widetilde{B}_t$ and the observation $X_t$ as input. If the compression was lossless then, conditioned on $\widetilde{B}_t$, the only information about $V$ remaining in $\widetilde{B}_{t+1}$ that is not already contained in $\widetilde{B}_t$ is exactly the information provided by the observation $X_t$. That is, we have that $\bI(V; \widetilde{B}_{t+1} | \widetilde{B}_t, X_t) = 0$ and, given $\widetilde{B}_t$, there is a Markov chain $V - X_t - \widetilde{B}_{t+1}$. However, for a lossy compression, the residual information contained in $\widetilde{B}_{t+1}$ not accounted for by $\widetilde{B}_t$ cannot exceed the full information offered by $X_t$ and we have an upper bound due to the (conditional) data-processing inequality:
    $$\bI(V; \{\widetilde{B}_t\}_{t=1}^{T+1}) = \bI(V; \widetilde{B}_1) + \sum\limits_{t=1}^T \bI(V; \widetilde{B}_{t+1} \mid \{\widetilde{B}_{t'}\}_{t'=1}^t) \leq \bI(V; \widetilde{B}_1) + \sum\limits_{t=1}^T \bI(V; X_t \mid \{\widetilde{B}_{t'}\}_{t'=1}^t).$$ Now all that remains is to show that, for each $t \in [T]$, $\bI(V; X_t \mid \{\widetilde{B}_{t'}\}_{t'=1}^t) \leq \bI(V; X_t \mid\widetilde{B}_t)$. Since knowledge of nature's choice of distribution $V$ renders all information contained in $\{\widetilde{B}_{t'}\}_{t'=1}^t$ about $X_t$ obsolete, it follows that 
    \begin{align*}
        \bI(V; X_t \mid \{\widetilde{B}_{t'}\}_{t'=1}^t) &= \bH(X_t \mid  \{\widetilde{B}_{t'}\}_{t'=1}^t) - \bH(X_t \mid  V, \{\widetilde{B}_{t'}\}_{t'=1}^t) \\
        &= \bH(X_t \mid \{\widetilde{B}_{t'}\}_{t'=1}^t) - \bH(X_t \mid V) \\
        &= \bH(X_t \mid \{\widetilde{B}_{t'}\}_{t'=1}^t) - \bH(X_t \mid V, \widetilde{B}_t) \\
        &\leq \bH(X_t \mid \widetilde{B}_t) - \bH(X_t \mid V, \widetilde{B}_t) \\
        &= \bI(V; X_t \mid\widetilde{B}_t),
    \end{align*}
    where the inequality follows from the fact that conditioning reduces entropy. Thus, putting everything together we have that $$\bI(V; \{\widetilde{B}_t\}_{t=1}^{T+1}) \leq \bI(V; \widetilde{B}_1) + \sum\limits_{t=1}^T \bI(V; X_t \mid \{\widetilde{B}_{t'}\}_{t'=1}^t) \leq \bI(V; \widetilde{B}_1) + \sum\limits_{t=1}^T \bI(V; X_t \mid \widetilde{B}_t).$$

Substituting back into the earlier result after utilizing Fano's inequality, we have 
\begin{align*}
    \bP(V \neq \widehat{V}_{T+1}) &\geq 1 - \frac{\min\left(\bI(\{X_t\}_{t=1}^T;\{\widetilde{B}_t\}_{t=2}^{T+1} \mid \widetilde{B}_1), \bI(V; \{\widetilde{B}_t\}_{t=1}^{T+1})\right) + \log(2)}{\log(|\mc{V}|)} \\
    &\geq 1 - \frac{\min\left(\sum\limits_{t=1}^{T} \bI(\widetilde{B}_t, X_t; \widetilde{B}_{t+1}), \bI(V; \{\widetilde{B}_t\}_{t=1}^{T+1})\right) + \log(2)}{\log(|\mc{V}|)} \\
    &\geq 1 - \frac{\min\left(  \sum\limits_{t=1}^{T} \bI(\widetilde{B}_t, X_t; \widetilde{B}_{t+1}), \bI(V; \widetilde{B}_1) + \sum\limits_{t=1}^T \bI(V; X_t \mid \widetilde{B}_t)\right) + \log(2)}{\log(|\mc{V}|)},
\end{align*}
as desired.
\end{proof}


\end{document}